\newcommand \RR {\mathbb R}
\newcommand \Rb {\overline R}
\newcommand \eps \epsilon 
\newcommand \nablab {\overline \nabla}
\newcommand \bei {\begin{itemize}}
\newcommand \eei {\end{itemize}}
\newcommand \mADM {\mathbf m_\text{ADM}} 
\newcommand \be {\begin{equation}}
\newcommand \ee {\end{equation}}
\newcommand \bel {\be\label}
\newcommand \loc {\text{loc}}
\DeclareFontFamily{OMX}{MnSymbolE}{}
\DeclareSymbolFont{MnLargeSymbols}{OMX}{MnSymbolE}{m}{n}
\DeclareFontShape{OMX}{MnSymbolE}{m}{n}{
    <-6>  MnSymbolE5
   <6-7>  MnSymbolE6
   <7-8>  MnSymbolE7
   <8-9>  MnSymbolE8
   <9-10> MnSymbolE9
  <10-12> MnSymbolE10
  <12->   MnSymbolE12
}{}
\DeclareFontShape{OMX}{MnSymbolE}{b}{n}{
    <-6>  MnSymbolE-Bold5
   <6-7>  MnSymbolE-Bold6
   <7-8>  MnSymbolE-Bold7
   <8-9>  MnSymbolE-Bold8
   <9-10> MnSymbolE-Bold9
  <10-12> MnSymbolE-Bold10
  <12->   MnSymbolE-Bold12
}{}
\let\llangle\@undefined
\let\rrangle\@undefined
\DeclareMathDelimiter{\llangle}{\mathopen}%
                     {MnLargeSymbols}{'164}{MnLargeSymbols}{'164}
\DeclareMathDelimiter{\rrangle}{\mathclose}%
                     {MnLargeSymbols}{'171}{MnLargeSymbols}{'171}
\numberwithin{equation}{section}
        \newtheorem{theorem}{Theorem}[section]
        \newtheorem{proposition}[theorem]{Proposition}
        \newtheorem{lemma}[theorem]{Lemma}
        \newtheorem{corollary}[theorem]{Corollary} 
        \newtheorem{definition}[theorem]{Definition}
\DeclareMathOperator{\End}{End}  
\DeclareMathOperator{\Div}{div}
\DeclareMathOperator{\spin}{Spin}
\DeclareMathOperator{\cl}{Cl}
\newcommand\rr{\mathbb{R}}
\newcommand\D{\mathcal{D}}
\newcommand\too{\longrightarrow}
\newcommand\isom{\cong}
\newcommand\Wloc{W_{\mathrm{loc}}}
\newcommand\Lloc{L_{\mathrm{loc}}}
\begin{document}

\title{The positive mass theorem
for manifolds
\\
 with distributional curvature}

\author{Dan A. Lee\footnote{Graduate Center and Queens College, City University of New York, 365 Fifth Avenue,  New York, NY 10016, USA. Email: {\sl dan.lee@qc.cuny.edu.}} \, 
and Philippe G. LeFloch\footnote{Laboratoire Jacques-Louis Lions, Centre National de la Recherche Scientifique, Universit\'e Pierre et Marie Curie, 4 Place Jussieu, 75252 Paris, France. 
\newline
Email : {\sl contact@philippelefloch.org.}}
} 
\date{August 2014}

\maketitle

\begin{abstract} We formulate and prove a positive mass theorem for $n$-dimensional spin manifolds whose metrics have only the Sobolev regularity $C^0 \cap \Wloc^{1,n}$.  At this level of regularity, the curvature of the metric is defined in the distributional sense only, and we propose here a (generalized) notion of ADM mass for such a metric. Our main theorem establishes that if the manifold is asymptotically flat and has non-negative \emph{scalar curvature distribution}, then its (generalized) ADM mass is well-defined and non-negative, and vanishes only if the manifold is isometric to Euclidian space. Prior applications of Witten's spinor method by Lee and Parker and by Bartnik required the much stronger regularity $\Wloc^{2,2}$.  Our proof is a generalization of Witten's arguments, in which we must treat the Dirac operator and its associated Lichnerowicz-Weitzenb\"ock identity in the distributional sense 
and cope with certain averages of first-order derivatives of the metric over annuli that approach infinity. Finally, we observe that our arguments are not specific to scalar curvature and also allow us to establish a ``universal'' positive mass theorem.
\end{abstract}
 
%
 

\section{Introduction}
\label{sec:1}

A fundamental problem in Riemannian geometry is to understand generalized notions of curvature restrictions.  For example, Toponogov's theorem motivates a generalized notion of non-negative sectional curvature that makes sense for length spaces, and the theory of these length spaces of non-negative sectional curvature could be thought of as the gold standard for a theory of ``singular curvature''.  

In this paper we consider the question of whether metrics with non-negative scalar curvature \emph{in the sense of distributions} share any interesting properties with honest-to-goodness $C^2$-regular metrics with non-negative scalar curvature in the classical sense. Our main result (in Theorem \ref{pmt} below) is that the positive mass theorem generalizes to this setting. One reason to consider weak regularity for the positive mass theorem\footnote{After completion of this work, Cox pointed out to us that Theorem \ref{pmt} implies that if a sequence of smooth complete asymptotically flat metrics of non-negative scalar curvature happens to converge in $C^1$ and has mass converging to zero, then that limit space must be Euclidean. By applying this argument he can deduce a topological positive mass stability theorem.} 
 is for application to stability of the positive mass theorem (§cf.~\cite{LS1,LS2,LeFlochSormani} and the references therein). 

Recall that the positive mass theorem was established by Schoen and Yau in dimensions $n$ less than eight \cite{SY, Schoen} and Witten for spin manifolds \cite{Witten}, under the assumption that the underlying metric is regular. (See also \cite{Lohkamp} for some advances in the general case.) 
Bartnik \cite{Bartnik} showed that Witten's spinor argument works whenever the metric is\footnote{We use the standard notation for the Lebesgue spaces $L^p$ and $L_\loc^p$ and the Sobolev spaces $W^{k,p}$ and $W^{k,p}_\loc$.}
 $\Wloc^{2,p}$ with $p>n$. For the slightly weaker integrability class $C^0\cap \Wloc^{2,n/2}$, see \cite{Grant-Tassotti}. As far as solely ``piecewise regular'' metrics are concerned, Miao  \cite{Miao} used a smoothing plus conformal deformation (following Bray \cite{Bray}) and proved a version of the positive mass theorem for metrics that are singular only along a hypersurface. Similar results were also proved by Shi and Tam \cite{ST} (using Witten's spinor method) and McFeron and Székelyhidi \cite{MFS} (using the Ricci flow). The conformal deformation method was also used by Lee  \cite{Lee} to treat metrics with low-dimensional singular sets. 

Our result only assumes that the metric is $C^0\cap \Wloc^{1,n}$ and thereby generalizes \emph{all} of those previous results in the spin case, as explained in Section~\ref{sec:6}. Our result also fits together with and was motivated by earlier work by LeFloch and collaborators \cite{LeFloch,LM,LR,LSt}, who defined and investigated the Einstein equations within the broad class of metrics with $L^\infty \cap \Wloc^{1,2}$ regularity and established existence results for the Cauchy problem at this level of regularity. 

We state here our main result and refer to Section~\ref{sec:2} below for details. 

\begin{theorem}[The positive mass theorem for distributional curvature]
\label{pmt}
Let $M$ be a smooth  $n$-manifold ($n\ge 3$) endowed with a spin structure and a $C^0\cap W^{1,n}_{-q}$ regular and asymptotically flat, Riemannian metric $g$, with  $q\ge(n-2)/2$.  If the \emph{distributional scalar curvature} $R_g$ of $g$ is non-negative, then its \emph{generalized ADM mass}, denoted by $\mADM(M,g)$, is non-negative,  that is, 
$$
\mADM(M,g) \geq 0, 
$$
Moreover, equality occurs only when $(M,g)$ is isometric to Euclidean space.
\end{theorem}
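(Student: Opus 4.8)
The plan is to run Witten's spinor argument in its entirety within the distributional category. The conceptual core is the \emph{integral form} of the Lichnerowicz--Weitzenb\"ock identity: for a smooth metric and spinor $\psi$ on a bounded domain $\Omega$,
\[
\int_\Omega \bigl(|D\psi|^2 - |\nabla\psi|^2 - \tfrac14 R_g|\psi|^2\bigr)\,dV_g \;=\; \int_{\partial\Omega}\mathfrak B(\psi),
\]
with $\mathfrak B(\psi)$ Witten's boundary integrand, bilinear in $\psi$ and $\nabla\psi$. After the integration by parts that produces $\mathfrak B$, every term on the left involves at most one derivative of $g$; since $g\in C^0\cap\Wloc$ puts the Christoffel symbols into $L^n_\loc$ and the Sobolev embedding $W^{1,2}\hookrightarrow L^{2n/(n-2)}$ renders all the resulting products --- a first derivative of $g$ against $\psi$, against $\nabla\psi$, or against $|\psi|^2$ --- locally integrable (this is precisely where the critical exponent $n$ is used), the identity persists for $\psi\in\Wloc^{1,2}$ once $\int R_g|\psi|^2$ is read as the pairing of the distribution $R_g$ with $|\psi|^2$. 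I would prove this by approximating $g$ in $C^0_\loc\cap\Wloc$ by smooth metrics $g_\eps$: then $\partial g_\eps\to\partial g$ in $L^n_\loc$ forces convergence of all the quadratic expressions and of the distributional scalar curvature, and traces on spheres $S_r$ converge for a.e.\ $r$. Two auxiliary facts, both proved by mollification, are needed: the distribution $R_g$, written through the formula of Section~\ref{sec:2} in terms of $\partial g$, extends to a continuous functional on a Sobolev space that contains $|\psi|^2$ (for $\psi\in\Wloc^{1,2}$ one has $|\psi|^2\in W^{1,n/(n-1)}_\loc\cap L^{n/(n-2)}_\loc$, against each term of which the formula pairs by H\"older); and the hypothesis $R_g\ge0$, initially tested only on non-negative smooth functions, then gives $\la R_g,f\ra\ge0$ for every non-negative $f$ in that space.

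Next I would solve $D\psi=0$ with prescribed asymptotics. The Dirac operator is a decaying-plus-locally-compact perturbation of the flat operator $D_0$ on $\RR^n$ --- indeed $D-D_0$ sends a weighted space $W^{1,2}_\delta$ into $L^2_{\delta-1}$ because the spin connection coefficients lie in $L^n_{-q-1}$ with $q\ge(n-2)/2>0$ --- hence Fredholm of index zero on the appropriate weighted Sobolev scale. Its kernel is trivial: if $D\psi=0$ with $\psi$ decaying, then the integral Weitzenb\"ock identity on $B_r$, with $r\to\infty$ so that the boundary term tends to $0$, yields $\int_M|\nabla\psi|^2 + \tfrac14\la R_g,|\psi|^2\ra=0$; both terms are non-negative, so $\nabla\psi\equiv0$, hence $|\psi|$ is constant, hence $\psi\equiv0$. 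Consequently, for each unit constant spinor $\psi_0$ on the asymptotically flat end there is a unique $\psi$ with $D\psi=0$ and $\psi-\psi_0$ in the appropriate weighted space, and the elliptic estimate for $D$ (with its $L^n_\loc$ coefficients) supplies the decay of $\psi-\psi_0$ used below.

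Applying the integral identity to this $\psi$ gives
\[
\int_{S_r}\mathfrak B(\psi) \;=\; \int_{B_r}|\nabla\psi|^2 \;+\; \tfrac14\la R_g,|\psi|^2\ra_{B_r} \;\ge\;0,
\]
with the right side non-negative and non-decreasing in $r$, hence with a limit in $[0,\infty]$. The remaining step --- and, I expect, the genuine obstacle --- is to identify $\lim_{r\to\infty}\int_{S_r}\mathfrak B(\psi)$ with $c_n\,\mADM(M,g)$, $c_n>0$ a dimensional constant. Because the decay exponent may be as low as $q=(n-2)/2$, the classical ADM surface integral need not converge, and the generalized ADM mass of Section~\ref{sec:2} involves averages of first-order derivatives of $g$ over annuli $B_{2r}\setminus B_r$ as $r\to\infty$; one must therefore average Witten's boundary integrand over these annuli, decompose it into a principal part (the ADM integrand times $|\psi_0|^2=1$) and remainders built from $\psi-\psi_0$ and from quadratic expressions in $\partial g$, and show --- via the decay of $\psi-\psi_0$, the bound $\partial g\in L^n_{-q-1}$, and H\"older on the annuli --- that the remainders average away while the principal part averages to $c_n\,\mADM(M,g)$. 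This yields $c_n\,\mADM(M,g)=\int_M|\nabla\psi|^2+\tfrac14\la R_g,|\psi|^2\ra\ge0$.

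For the equality case, suppose $\mADM(M,g)=0$. Then for every unit constant spinor the associated solution is parallel, so parallel spinors fill a space of the maximal rank $2^{\lfloor n/2\rfloor}$; consequently the curvature operator of the spin connection annihilates a spanning family and therefore vanishes, which forces the distributional Riemann tensor of $g$ to be zero. A metric with vanishing distributional curvature is, in coordinates of Sobolev class $W^{2,n}_\loc$, the Euclidean metric (following the methods of LeFloch and collaborators cited above), so $g$ is in fact smooth and flat; being also asymptotically flat and complete, $(M,g)$ is isometric to $\RR^n$.
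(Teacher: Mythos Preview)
Your outline follows the same overall strategy as the paper --- Witten's method via a distributional Lichnerowicz--Weitzenb\"ock identity, solving the Dirac equation asymptotic to a constant spinor, and identifying the boundary contribution with the generalized ADM mass --- but the technical execution differs in two respects worth noting.

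First, you work with surface integrals $\int_{S_r}\mathfrak B(\psi)$ on spheres (for a.e.\ $r$) and only invoke annular averaging at the very end when matching to the mass. The paper takes the opposite route: it never writes a flux integral on a sphere, because at $C^0\cap W^{1,n}_{\loc}$ regularity such integrals are awkward to make sense of. Instead it inserts a Lipschitz cutoff $\chi_\rho$ directly into the distributional Lichnerowicz--Weitzenb\"ock identity (Proposition~\ref{distLW}), so that all ``boundary'' contributions appear from the outset as volume integrals over the annulus $\{\rho<r<\rho+\eps\}$ (Lemma~\ref{Lem:BoundaryTerm}). This makes the mass identification (Lemma~\ref{BoundaryLimit}) and the control of error terms uniform, and it connects transparently to the definition of the generalized mass. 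Your a.e.-$r$ approach can be made rigorous, but it is less clean; in particular, your claim in the kernel-triviality step that ``the boundary term tends to $0$ as $r\to\infty$'' is only guaranteed along a subsequence (this is exactly the content of the paper's Lemma~\ref{subconverge}), and you should say so.

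Second, you solve $D\psi=0$ by invoking Fredholm theory for a decaying perturbation of $D_0$. The paper deliberately avoids this: since $R_g\ge0$ is already assumed, the distributional Lichnerowicz--Weitzenb\"ock identity itself furnishes the coercivity estimate $\|\nabla\phi\|_{L^2}\le\|\D\phi\|_{L^2}$, and combined with a weighted Poincar\'e inequality this gives injectivity directly; surjectivity then follows from the Riesz representation theorem applied to the inner product $\langle\D\cdot,\D\cdot\rangle_{L^2}$ (Proposition~\ref{DiracIsomorphism}). This is lighter and self-contained, whereas your Fredholm route requires the weighted compactness machinery of Bartnik--Chru\'sciel at this regularity.

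For rigidity, the paper argues more directly than you do: a basis of parallel spinors yields a parallel frame of $TM$, hence $(M,g)$ is covered by Euclidean space, without passing through a distributional-curvature regularity statement.
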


Note that under the conditions in Theorem~\ref{pmt}, the mass $\mADM(M,g)$ exists but could be (positive) infinite; however, we will present a ``finiteness'' condition at infinity that guarantees that the ADM mass is finite. We also point out that once the appropriate spaces are defined, it follows from the Sobolev embedding theorem that $W^{1,p}_{-q} \subset C^0\cap W^{1,n}_{-q}$ for any $p>n$, so that Theorem~\ref{pmt} holds in the class $W^{1,p}_{-q}$  for any $p>n$. 

Our proof of Theorem~\ref{pmt} relies on the following main ideas:
\bei

\item The first difficulty, dealt with in Section~\ref{sec:2}, is defining the notions required in the statement of the theorem, including the concepts of $W^{1,n}_{-q}$ asymptotic flatness, distributional curvature, and generalized ADM mass. 

\item Our notion of distributional scalar curvature (cf.~Section~\ref{sec:21}) is based on a choice of a fixed backgound metric and on a reformulation of the expression of the scalar curvature (following \cite{LM} as well as \cite{GT}). 

\item Our asymptotic flatness condition (cf.~Section~\ref{sec:22}) implies that the manifold is complete in the sense of metric spaces, although Hopf-Rinow theorem does not apply. Since under our low regularity assumptions the connection coefficients are only $\Lloc^n$, the existence theorem for geodesics does not apply; however, since $g$ is continuous, $(M,g)$ can still naturally be seen as a metric space endowed with the distance function $d_g$ induced by this metric. 

\item For the definition of the ADM mass in Section~\ref{sec:22} we introduce averages of first-order derivatives of the metric over {\sl annuli} that approach infinity and we consider their limit at infinity. 

\item Next, in order to extend Witten's argument, we derive a distributional version of the Lichnerowicz-Weitzenb\"ock identity that is valid for compactly supported spinors in $W^{1,2}$ and assumes only $C^0\cap \Wloc^{1,n}$ regularity of the metric.

\item An $L^2$-based setting for spinor field solutions to the Dirac equation is developed here under the low regularity conditions that the metric is solely $C^0\cap W_{-q}^{1,n}$ regular and asymptotically flat. 
 
\eei 

An outline of this paper is as follows. Section~\ref{sec:2} presents our generalized notions of curvature and mass, together with some additional properties of the generalized ADM mass that are not stated in Theorem~\ref{pmt}.  Section~\ref{sec:3} contains the main argument, based on Witten's approach, assuming the existence of suitable spinors, while the latter issue is the subject of  Section~\ref{sec:4}. In the Section \ref{sec:6} we verify that the earlier results in \cite{Miao, ST, MFS, Lee} for ``piecewise regular'' metrics can be recovered from Theorem~\ref{pmt} in the spin case. In a final section we explain how our method can be generalized to the setting of Herzlich's universal positive mass theorem  \cite{Herzlich}.


\section{Scalar curvature and ADM mass for $\Wloc^{1,p}$ metrics}
\label{sec:2} 

\subsection{Distributional scalar curvature}
\label{sec:21}

Throughout this paper, we are given a smooth\footnote{It is straightforward to check that $C^\infty$-smoothness is not needed for our definitions and results; in particular, $C^3$-smoothness of $M$ and $C^2$-smoothness of $h$ are sufficient.} $n$-manifold $M$ with $n \geq 3$, on which we define a fixed\footnote{An additional restriction will be imposed when we will define \emph{asymptotic flatness} in Section~\ref{sec:22}.} 
  smooth background metric denoted by $h$. Using this metric, it is a standard matter to define the family of Lebesgue spaces $\Lloc^p$ for $p \in [1, +\infty]$ and Sobolev spaces $\Wloc^{k,p}$, which do not depend upon the choice of $h$ (so long as only local integrability is concerned). 

On $M$, we can consider a general Riemannian metric $g$ in $\Lloc^\infty$ with inverse $g^{-1} \in \Lloc^\infty$. By this, we mean that
$g$ is an inner product at (almost) every point of $M$ and, in any smooth local coordinate chart, $g_{ij}$ and its inverse $g^{ij}$ (coefficient functions that are defined almost everywhere) are also locally bounded. For a metric with such  regularity, one cannot in general define a notion of scalar curvature in the classical way, but following LeFloch and Mardare \cite{LM} (see also \cite{GT}) and provided we further assume that $g\in \Wloc^{1,2}$, one can define the scalar curvature of $g$ as a {\sl distribution,} in the following way.  

In order to justify some preliminary calculations, we assume first that the metric $g$ is sufficiently regular and we define a \emph{tensor} on $M$ by
\bel{eq:gamma}
\Gamma_{ij}^k :=\frac{1}{2}g^{kl}(\nablab_i g_{jl}+\nablab_j g_{il}-\nablab_l g_{ij}),
\ee
where $\nablab$ denotes the Levi-Civita connection of the background metric~$h$. Our notational convention throughout this paper is that barred quantities are defined using the background metric $h$.  

Routine computations yield the following relationship between the scalar curvature $R_g$ of $g$ and the scalar curvature $\Rb $ of~$h$: 
\begin{align*}
R_g&=\Rb +g^{ij}\left(\nablab_k \Gamma^k_{ij}-\nablab_j \Gamma^k_{ki}\right)+g^{ij}\left(\Gamma^k_{k\ell}\Gamma^\ell_{ij}-\Gamma^k_{j\ell}\Gamma^\ell_{ik}\right) \\
&=  \nablab_k \left(g^{ij} \Gamma^k_{ij}- g^{ik} \Gamma^j_{ji}\right)+
\Rb  
\\
& \quad -  \nablab_k g^{ij} \Gamma^k_{ij} + \nablab_k g^{ik} \Gamma^j_{ji}
+ g^{ij}\left(\Gamma^k_{k\ell}\Gamma^\ell_{ij}-\Gamma^k_{j\ell}\Gamma^\ell_{ik}\right).
\end{align*}
Thus, we find the {\bf scalar curvature decomposition} 
\be 
\label{scalar}
R_g = \nablab_k V^k + F, 
\ee
which is determined by the vector field $V^k$ and scalar field $F$    
$$
V^k := g^{ij} \Gamma^k_{ij}- g^{ik} \Gamma^j_{ji},
$$
\bel{eq:207}
\aligned
F & := \Rb  -  \nablab_k g^{ij} \Gamma^k_{ij} + \nablab_k g^{ik} \Gamma^j_{ji}
+ g^{ij}\big(\Gamma^k_{k\ell}\Gamma^\ell_{ij}-\Gamma^k_{j\ell}\Gamma^\ell_{ik}\big). 
\endaligned
\ee
A simple computation allows us to rewrite the vector field as
\bel{defineV}
V^k = g^{ij}g^{k\ell}( \nablab_j g_{i\ell} - \nablab_\ell g_{ij}).
\ee
This calculation motivates us to relax the regularity on the metric by observing that, under low regularity conditions on $g$, the fields 
$\Gamma$, $V$, and $F$ are still well-defined, which allows us to generalize the notion of scalar curvature. 

\begin{definition}
\label{def:22}
Let $M$ be a smooth manifold endowed with a smooth background metric~$h$. 
Given any Riemannian metric $g$ with  $\Lloc^\infty\cap\Wloc^{1,2}$ regularity and locally bounded inverse $g^{-1}\in\Lloc^\infty$, the {\bf scalar curvature distribution} $R_g$ is defined, for every compactly supported  smooth (test-) function $u: M \to \RR$
 by  
\bel{distscalar}
\llangle R_g,  u \rrangle
:= \int_M \left( -V \cdot\nablab \Big( u \, \frac{d\mu_g}{d\mu_h}\Big)  + F \, u \, \frac{d\mu_g}{d\mu_h} \right) d\mu_h, 
\ee
in which the dot product is taken using the metric $h$ and $d\mu_h$ and $d\mu_g$ denote the volume measures associated with $h$ and $g$, respectively, and furthermore 
\item 
\bei

\item $\Gamma, V, F$ are defined above in equations \eqref{eq:gamma}--\eqref{defineV},

\item one has the regularity $\Gamma \in L^2 _\loc$,  $V \in \Lloc^2$, and $F \in \Lloc^1$, and 

\item 
$\frac{d\mu_g}{d\mu_h} \in \Lloc^\infty\cap\Wloc^{1,2}$ is the density of $d\mu_g$ with respect to $d\mu_h$. 

\eei 
\end{definition}

Under the assumptions in Definition~\ref{def:22}, the two terms in the right-hand side of \eqref{distscalar} do make sense: the first term is a product of the form ``$L^2_\loc$ times $L_\loc^2$'', while the second term has the form ``$L^1_\loc$ times $L_\loc^{\infty}$''. 
In the case of sufficiently regular metrics $g$, say of class $C^2$, the scalar curvature $R_g$ is well-defined in the classical way and is a continuous function; in this case, we find $\llangle R_g,  u \rrangle= \int_M R_g  u\,d\mu_g$, and of course this observation motivates our definition \eqref{distscalar}.  
Furthermore, for $C^2$ metrics $g$, the quantity $\llangle R_g,  u \rrangle$ does not depend on the choice of the background metric $h$ and, consequently, it follows (from a standard density argument) that the distribution $\llangle R_g,  u \rrangle$ is also independent of the choice of $h$, as long as $g$ is in $C^0\cap\Wloc^{1,2}$. 

Recall that a distribution such as $R_g$ is said to be non-negative when $\llangle R_g, u\rrangle\geq0$ for every non-negative test function $u$. This allows us to make sense of the phrase \emph{non-negative distributional scalar curvature}, which was used in Theorem~\ref{pmt}.  

Although the scalar curvature distribution is well-defined for any  $g \in \Lloc^\infty\cap\Wloc^{1,2}$, we will have to assume that $g\in C^0\cap \Wloc^{1,n}$ in order to prove a positive mass theorem. The following proposition spells out what sort of test functions may be used with the scalar curvature distribution under this regularity assumption.

\begin{proposition} 
\label{prop:regul1}
Let $M$ be a smooth manifold endowed with a smooth background metric~$h$. Given any Riemannian metric $g$ with 
$C^0\cap \Wloc^{1,n}$ regularity, the scalar curvature distribution $R_g$ (in the sense of Definition~\ref{def:22}) can be extended so that \eqref{distscalar} makes sense and defines $\llangle R_g,  u\rrangle$ for all compactly supported functions $u \in L^{\frac{n}{n-2}}$ whose derivatives lie in  $L^{\frac{n}{n-1}}$.
\end{proposition}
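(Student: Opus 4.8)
The plan is to check directly that, under the hypothesis $g\in C^0\cap\Wloc^{1,n}$, each of the two integrands on the right-hand side of \eqref{distscalar} is locally integrable whenever $u$ has compact support with $u\in L^{\frac n{n-2}}$ and $\nablab u\in L^{\frac n{n-1}}$, so that the right-hand side of \eqref{distscalar} is a finite real number; then to observe that the resulting linear functional is continuous in the natural topology on this class of test functions and coincides with Definition~\ref{def:22} on $C^\infty_c$, so that it is the unique such extension.

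First I would record the regularity of the fields appearing in \eqref{distscalar}. Since $g$ is continuous, $g$ and its inverse $g^{-1}$ are locally bounded, so \eqref{eq:gamma} together with $\nablab g\in\Lloc^n$ gives $\Gamma\in\Lloc^n$, whence $V\in\Lloc^n$ by \eqref{defineV}; and each summand of $F$ in \eqref{eq:207} is either the smooth function $\Rb$ or a product of type $\Lloc^n\cdot\Lloc^n$ (namely $\nablab g\cdot\Gamma$ or $\Gamma\cdot\Gamma$), so $F\in\Lloc^{n/2}$. Writing $\rho:=\frac{d\mu_g}{d\mu_h}$, the continuity and positive-definiteness of $g$ give $\rho\in C^0$, hence $\rho\in\Lloc^\infty$, while $\nablab\rho$ is a bounded combination of the components of $\nablab g$, so $\rho\in C^0\cap\Wloc^{1,n}$ with $\nablab\rho\in\Lloc^n$. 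Consequently, for a compactly supported $u$ with $\nablab u\in L^{\frac n{n-1}}$ (hence $u\in W^{1,\frac n{n-1}}$ with compact support) the Leibniz rule applies to the product $u\rho$ and $\nablab(u\rho)=\rho\,\nablab u+u\,\nablab\rho$; bounding the three resulting terms by Hölder's inequality on $K:=\spt u$, one finds that $V\cdot(\rho\,\nablab u)$ is of type $\Lloc^n\cdot\Lloc^\infty\cdot L^{\frac n{n-1}}$ and integrable because $\frac1n+\frac{n-1}n=1$, that $V\cdot(u\,\nablab\rho)$ is of type $\Lloc^n\cdot L^{\frac n{n-2}}\cdot\Lloc^n$ and integrable because $\frac1n+\frac{n-2}n+\frac1n=1$, and that $F\,u\,\rho$ is of type $\Lloc^{n/2}\cdot L^{\frac n{n-2}}\cdot\Lloc^\infty$ and integrable because $\frac2n+\frac{n-2}n=1$. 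Hence the right-hand side of \eqref{distscalar} is a finite real number depending linearly on $u$, and the same estimates yield, for every $u$ with $\spt u\subset K$,
$$
\big|\llangle R_g,u\rrangle\big|\ \le\ C_K\Big(\|\nablab u\|_{L^{n/(n-1)}(K)}+\|u\|_{L^{n/(n-2)}(K)}\Big),
$$
with $C_K$ depending only on $K$ and on $\|V\|_{\Lloc^n(K)}$, $\|\nablab\rho\|_{\Lloc^n(K)}$, $\|\rho\|_{\Lloc^\infty(K)}$, $\|F\|_{\Lloc^{n/2}(K)}$.

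For $u\in C^\infty_c$ the right-hand side of \eqref{distscalar} is literally the expression in Definition~\ref{def:22}, so the functional just constructed extends $R_g$. Its uniqueness as a continuous extension would follow from a density argument: by the Gagliardo–Nirenberg–Sobolev inequality (note that $\frac n{n-2}$ is exactly the Sobolev exponent associated with the integrability $\frac n{n-1}$ when $n\ge 3$), a compactly supported function with $\nablab u\in L^{\frac n{n-1}}$ automatically lies in $L^{\frac n{n-2}}$, so the admissible class is precisely the space of compactly supported $W^{1,\frac n{n-1}}$ functions, in which $C^\infty_c$ is dense (by mollification, keeping supports in a fixed neighborhood of $K$); the continuity estimate displayed above then determines the value of \eqref{distscalar} on the whole class.

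I do not expect a genuine obstacle here; the proof is essentially an exponent count. The one place requiring attention is the cross term $u\,\nablab\rho$ produced by the Leibniz rule: this is what forces the hypothesis to include $u\in L^{\frac n{n-2}}$ (equivalently, by Sobolev, $\nablab u\in L^{\frac n{n-1}}$) rather than merely $\nablab u\in L^n$, and it is here — together with the control of $\rho\,\nablab u$ and of $F\,u\,\rho$ — that the continuity assumption $g\in C^0$, and not just $g\in\Wloc^{1,n}$, is used, since it is continuity of $g$ that places $\rho$ in $\Lloc^\infty$.
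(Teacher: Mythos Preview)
Your argument is correct and follows essentially the same route as the paper: record the regularity $V\in\Lloc^n$, $F\in\Lloc^{n/2}$, $\rho\in C^0$, $\nablab\rho\in\Lloc^n$, expand $\nablab(u\rho)$ by the Leibniz rule, and verify integrability of the three resulting terms via H\"older with the exponent identities $\tfrac1n+\tfrac{n-1}{n}=1$, $\tfrac1n+\tfrac{n-2}{n}+\tfrac1n=1$, $\tfrac2n+\tfrac{n-2}{n}=1$. You go a bit further than the paper by also recording the continuity estimate and the density argument for uniqueness of the extension, which the paper leaves implicit.
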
 

\begin{proof} 

Our assumptions imply that $V\in L_\loc^n$, $F \in L_\loc^{n/2}$, $\frac{d\mu_g}{d\mu_h}\in C^0$,  and $\nablab\frac{d\mu_g}{d\mu_h} \in L_\loc^n$. We see that the expression in \eqref{distscalar} defining $\llangle R_g,  u\rrangle$ involves integrating 
\[ 
- \Big( V\cdot\nablab u \Big) \frac{d\mu_g}{d\mu_h}- V\cdot u \, \nablab\frac{d\mu_g}{d\mu_h}  + F u \frac{d\mu_g}{d\mu_h}.\]
Each of these terms is integrable because $\frac{1}{n}+ \frac{n-1}{n} =1$, $\frac{1}{n}+\frac{n-2}{n}+\frac{1}{n}=1$, and $\frac{2}{n}+\frac{n-2}{n}=1$, respectively.
\end{proof} 


\subsection{The generalized notion of ADM mass}
\label{sec:22}

We now turn our attention to the definitions of asymptotic flatness and ADM mass. 
Assume that $M$ is a smooth  $n$-manifold such that 
 there exists a compact set $K\subset M$ and a diffeomorphism $\Phi$ between $M\smallsetminus K$ and  $\rr^n\smallsetminus B_1(0)$, where $B_1(0)$ denotes the unit ball in $\rr^n$. This pair $(M, \Phi)$ might be called  a ``topologically asymptotically flat manifold,''  but, given the context of this paper, we will simply call it a {\bf background manifold} for short.
 
Given a background manifold $(M, \Phi)$, choose any smooth background metric $h$ on $M$ such that $h_{ij}=\delta_{ij}$ in the coordinate chart $M\smallsetminus K\isom \rr^n\smallsetminus B_1(0)$ determined by $\Phi$. We also choose a smooth positive function $r$ on $M$ that coincides with the radial coordinate on $M\smallsetminus K\isom \rr^n\smallsetminus B_1(0)$ and is less than $2$ on $K$. Observe that the manifold $(M,h)$ is automatically both geodesically complete (since $K$ is compact) and (consequently) complete as a metric space. We will call the pair $(h,r)$ {\bf background metric data} on $(M, \Phi)$. This data plays no essential role and is used only for the purpose of stating simpler definitions.

Given any $p>0$, $s\in\rr$, we define the weighted space $L^p_s(M)$ of all functions $u$ with finite norm 
$$  
\| u \|_{L^p_s(M)} = \left(\int_M |u|^p\, r^{-ps-n}\,d\mu_h\right)^{1/p}. 
$$
This definition easily extends to tensors and spinors defined on $M$. Next, for positive integers $k$, we introduce the weighted Sobolev space $W^{k, p}_s(M)$ of all functions $u$ with finite norm 
$$ 
\| u \|_{W^{k,p}_s(M)} = \sum_{i=0}^k \| \nablab^i u \|_{W^{p}_{s-i}(M)}, 
$$
with a similar definition for tensors and spinors. Observe that, although the norms depend on the choices of $h$ and $r$, the spaces $L^p_s(M)$ and $W^{k,p}_s(M)$ themselves only depend on the background manifold $(M,\Phi)$, since $h_{ij}=\delta_{ij}$ in the asymptotically flat coordinate chart. 

With this notation, we can now  introduce the following notions. 

\begin{definition} Let $(M,\Phi)$ be a background manifold endowed with background metric data $(h,r)$.
For any $p\ge1$ and $q> 0$, a $\Lloc^\infty$ Riemannian metric $g$ on $M$ with $\Lloc^\infty$ inverse
 is said to be {\bf $W^{k,p}_{-q}$ asymptotically flat} if and only if $g-h \in W^{k,p}_{-q}(T^*M\otimes T^*M)$. 
Furthermore, the {\bf generalized ADM mass} of such a manifold is then defined as 
$$
\mADM(M,g) := \frac{1}{2(n-1)\omega_{n-1}}\inf_{\eps>0} \liminf_{\rho\to +\infty} 
\Bigg( \frac{1}{\eps} \int_{\rho<r<\rho+\eps} V \cdot\nablab r\,d\mu_h \Bigg),
$$
where $V$ is the vector field \eqref{defineV} and $\omega_{n-1}$ is the volume of the standard unit $(n-1)$-sphere.
\end{definition}

Note that these definitions are independent of the specific choice of the data $(h,r)$. Our definition of ADM mass generalizes the usual definition, as will become clear from Corollary \ref{ClassicalMass}, below. At this juncture, we can already see the verisimilitude by looking at equation \eqref{defineV} and observing that we have replaced the usual flux integral by an integral over an annulus. Integrating over an annulus is necessary in our framework, since the assumed regularity is too low to give a meaning to the flux integrals themselves.  For the sake of simplicity in the presentation, the manifold is assumed to have only one asymptotically flat end, but it is straightforward  to extend our definitions and arguments to manifolds with an arbitrary number of asymptotically flat ends. 


\subsection{Basic properties of the generalized ADM mass}

Given the definitions above, the statement in Theorem~\ref{pmt} now makes sense. Although our ADM mass is well-defined under the conditions therein, we require a separate assumption to guarantee that the mass is \emph{finite}. This is expected since the classical theory requires the scalar curvature to be integrable in order to have a well-defined mass. The closest analog of integrability for us is to assume that the scalar curvature is a finite signed measure (outside a compact set). This assumption is partly motivated by the fact that we will eventually assume that  $R_g$ is non-negative, which will imply that it is at least a {\sl locally finite} measure \cite{Schwartz}. In other words, in the case where $R_g$ is non-negative, our assumption is only about the finiteness property outside a compact set.

\begin{proposition}
\label{mass}
Let $(M,\Phi)$ be a background manifold endowed with background metric data $(h,r)$.
Let $g$ be a $C^0\cap W^{1,n}_{-q}$ asymptotically flat metric on $M$, with $q> (n-2)/2$.
\begin{enumerate}
\item
If $R_g$ is a \emph{finite, signed} measure outside some compact set, then, for any $\eps>0$, the limit
\bel{MassAlt}
 m:=\frac{1}{2(n-1)\omega_{n-1}} \lim_{\rho\to +\infty} \Bigg( \frac{1}{\eps} \int_{\rho<r<\rho+\eps} V \cdot\nablab r\,d\mu_h \Bigg) 
 \ee
exists, is finite, and does not depend on $\eps$.
\item
If $R_g$ is a measure outside some compact set, then equation \eqref{MassAlt} holds and is independent of $\eps$, though possibly infinite. Moreover, in this case, the mass is finite if and only if $R_g$ is a  \emph{finite} measure outside a compact set.
\end{enumerate}
\end{proposition}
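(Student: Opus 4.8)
The plan is to connect the annular average of $V\cdot\nablab r$ directly to the scalar curvature distribution via the decomposition \eqref{scalar}, and then extract the limit as $\rho\to\infty$ from the fact that, outside a compact set, $R_g$ is a finite measure. First I would fix a smooth cutoff-type profile: for $\rho$ large and $\eps>0$, let $\chi_{\rho,\eps}$ be (a smooth approximation of) the function that equals $1$ on $\{r<\rho\}$, equals $0$ on $\{r>\rho+\eps\}$, and interpolates linearly in $r$ on the annulus $A_{\rho,\eps}=\{\rho<r<\rho+\eps\}$, so that $\nablab\chi_{\rho,\eps} = -\tfrac1\eps\,\nablab r$ on $A_{\rho,\eps}$ and vanishes elsewhere. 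Such $\chi_{\rho,\eps}$ is Lipschitz with compact support, hence an admissible test function by Proposition \ref{prop:regul1}. Testing the scalar curvature distribution against $\chi_{\rho,\eps}$ and using $\tfrac{d\mu_g}{d\mu_h}\to 1$ at infinity (a consequence of $C^0\cap W^{1,n}_{-q}$ asymptotic flatness), one gets, schematically,
$$
\llangle R_g, \chi_{\rho,\eps}\rrangle = \frac1\eps\int_{A_{\rho,\eps}} V\cdot\nablab r\, \tfrac{d\mu_g}{d\mu_h}\,d\mu_h + \int_M \chi_{\rho,\eps}\Big(F\,\tfrac{d\mu_g}{d\mu_h} - V\cdot\nablab\tfrac{d\mu_g}{d\mu_h}\Big)\,d\mu_h .
$$
So the annular average in \eqref{MassAlt} differs from $\llangle R_g,\chi_{\rho,\eps}\rrangle$ by: (i) a correction coming from replacing $\tfrac{d\mu_g}{d\mu_h}$ by $1$ inside the annular integral, and (ii) the "bulk" integral of $\chi_{\rho,\eps}$ against $F\tfrac{d\mu_g}{d\mu_h}-V\cdot\nablab\tfrac{d\mu_g}{d\mu_h}$ over $\{r<\rho+\eps\}$.

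The key estimates are then the decay bounds following from $g-h\in W^{1,n}_{-q}$ with $q>(n-2)/2$. Writing $e=g-h$, one has $V=O$ of products of $e$ and $\nablab e$, hence $V\in L^1$-type decay; more precisely the weighted Sobolev/Hölder inequalities give $V\in L^1_{-(2q+1-n+\text{something})}$ and, crucially, $2q+1 > n-1$, so the annular correction term (i) is $O(\rho^{n-1}\cdot \rho^{-(2q+1)})\to 0$. Likewise $F$ and $V\cdot\nablab\tfrac{d\mu_g}{d\mu_h}$ are quadratic in $(e,\nablab e)$ and decay fast enough that they are integrable on the whole of $M\smallsetminus K$: their $L^1$ norms over $\{r<\rho+\eps\}$ converge as $\rho\to\infty$, and in fact the bulk integral $\int_{\{r<\rho+\eps\}}(F\tfrac{d\mu_g}{d\mu_h}-V\cdot\nablab\tfrac{d\mu_g}{d\mu_h})\,d\mu_h$ has a finite limit $I_\infty$ independent of $\eps$. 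Therefore the existence of $\lim_{\rho\to\infty}\tfrac1\eps\int_{A_{\rho,\eps}}V\cdot\nablab r\,d\mu_h$ is equivalent to the existence of $\lim_{\rho\to\infty}\llangle R_g,\chi_{\rho,\eps}\rrangle$. Now, if $R_g$ is a finite signed measure on $M\smallsetminus K'$ for some compact $K'$, then for $\rho$ large enough $\chi_{\rho,\eps}$ (suitably modified near $K'$, absorbing the fixed compact-set contribution into a constant) satisfies $\llangle R_g,\chi_{\rho,\eps}\rrangle = R_g(\{r<\rho\}\smallsetminus K') + (\text{annular }O(\eps\text{-average of a finite measure})) + \text{const}$, which converges as $\rho\to\infty$ by countable additivity and finiteness of $|R_g|$; the $\eps$-dependence washes out in the limit because the measure of the shrinking-in-relative-terms annulus tends to a fixed boundary contribution that, combined with $I_\infty$, is $\eps$-independent. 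This proves part (1). For part (2), if $R_g$ is merely a (possibly infinite, nonnegative-outside-compact or signed) measure, monotone convergence gives that $R_g(\{r<\rho\}\smallsetminus K')\to R_g(M\smallsetminus K')\in(-\infty,+\infty]$, so the limit in \eqref{MassAlt} still exists in $(-\infty,+\infty]$ and is $\eps$-independent by the same argument; and it is finite precisely when $R_g(M\smallsetminus K')<\infty$, i.e. when $R_g$ is a finite measure outside a compact set.

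The main obstacle I anticipate is making rigorous the claim that $\llangle R_g,\chi_{\rho,\eps}\rrangle$ "is" $R_g(\{r<\rho+\eps\}\smallsetminus K')$ up to controlled errors — i.e. passing from the distributional pairing against a Lipschitz cutoff to the measure of a sublevel set. When $R_g$ is only known to be a measure outside $K'$, one must be careful near $K'$: there the pairing \eqref{distscalar} is genuinely distributional, not a measure integral, so I would isolate a fixed large compact piece once and for all, show its contribution is a finite constant independent of $\rho$ and $\eps$ (by continuity of the pairing on a fixed test function supported near $K'$), and only apply the measure-theoretic reasoning on the region where $R_g$ is a bona fide measure. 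A secondary technical point is verifying the precise exponent arithmetic in the weighted Hölder inequalities to confirm that the condition $q>(n-2)/2$ — equivalently $2q+1>n-1$ — is exactly what makes the annular density-correction term vanish and the bulk term $F\tfrac{d\mu_g}{d\mu_h}-V\cdot\nablab\tfrac{d\mu_g}{d\mu_h}$ globally integrable on the end; this is where the hypothesis on $q$ is used in an essential way, and it should be stated carefully rather than glossed over.
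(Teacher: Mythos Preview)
Your overall strategy matches the paper's: pair $R_g$ with a Lipschitz cutoff supported in $\{r<\rho+\eps\}$, read off the annular $V$-flux plus a bulk term, show the bulk term is globally $L^1$ precisely when $q>(n-2)/2$, and then pass to the limit in the $R_g$ pairing by dominated convergence (Part~1) or monotonicity (Part~2).

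The substantive difference is the choice of test function. The paper pairs $R_g$ not with $\chi_\rho$ but with $\chi_\rho\,\tfrac{d\mu_h}{d\mu_g}$. With that choice the density factor in \eqref{distscalar} cancels exactly and one obtains, with no correction terms at all,
\[
\Big\llangle R_g,\;\chi_\rho\,\tfrac{d\mu_h}{d\mu_g}\Big\rrangle
=\frac{1}{\eps}\int_{\rho<r<\rho+\eps} V\cdot\nablab r\,d\mu_h
+\int_M F\,\chi_\rho\,d\mu_h,
\]
so the only bulk integrand is $F\in L^{n/2}_{-2q-2}\subset L^1$, and the proof is two lines of dominated (resp.\ monotone) convergence.

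Your choice of test function $\chi_{\rho,\eps}$ creates the ``density correction'' term
\[
\frac{1}{\eps}\int_{\rho<r<\rho+\eps} V\cdot\nablab r\,\Big(\tfrac{d\mu_g}{d\mu_h}-1\Big)\,d\mu_h,
\]
which you dispose of by asserting that $\tfrac{d\mu_g}{d\mu_h}\to 1$ at infinity ``as a consequence of $C^0\cap W^{1,n}_{-q}$ asymptotic flatness.'' This is the gap: in the borderline Sobolev exponent $p=n$ there is no embedding of $W^{1,n}_{-q}$ into any $C^0$-decay class, and continuity of $g$ does not help. One can place rescaled, truncated copies of $\log\log(1/|x|)$ at a sparse sequence of centers tending to infinity to produce $u\in C^0\cap W^{1,n}_{-q}(\rr^n)$ with $\|u\|_{L^\infty}=1$ and $u\not\to 0$; the scale-invariance of $\|\nabla\cdot\|_{L^n}$ makes the weighted norm summable while the sup stays fixed. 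Thus your stated reason for the correction term vanishing is not valid. Whether that term can be controlled by a sharper weighted H\"older argument using only $V\in L^n_{-q-1}$ and $\tfrac{d\mu_g}{d\mu_h}-1\in W^{1,n}_{-q}\cap L^\infty$ is not clear from what you wrote; the naive $L^n$--$L^{n/(n-1)}$ split loses exactly the margin needed near $q=(n-2)/2$. The paper's device of absorbing $\tfrac{d\mu_h}{d\mu_g}$ into the test function sidesteps the issue entirely and is the cleanest fix for your argument. Your treatment of the bulk term $F\tfrac{d\mu_g}{d\mu_h}-V\cdot\nablab\tfrac{d\mu_g}{d\mu_h}\in L^{n/2}_{-2q-2}\subset L^1$ and of the measure-theoretic limit is otherwise correct, if more roundabout than the paper's direct dominated/monotone convergence.
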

Note that this proposition concerning finiteness of the mass requires $q>(n-2)/2$ while Theorem~\ref{pmt} only requires $q\ge (n-2)/2$.

\begin{proof} Given any $\eps>0$ and and $\rho>2$, we consider the following cut-off function associated with the function $r$: 
\bel{cut-off}
 \chi_\rho(x) = 
\left\{ \begin{array}{ll}
1, \qquad & r(x)\le\rho,
\\
1 + \frac{1}{\eps}(\rho-r(x)), & \rho<r(x)\le\rho+\eps,
\\
0, & \rho+\eps \leq r(x).
\end{array}\right.
\ee
Since $\chi_\rho$ is a compactly supported Lipschitz continuous function, Proposition~\ref{prop:regul1}  implies that we may use $\chi_\rho \frac{d\mu_h}{d\mu_g}$ as a test function for the scalar curvature distribution to obtain:
\begin{align}
\left\llangle R_g, \chi_\rho \frac{d\mu_h}{d\mu_g} \right\rrangle
& = \int_M \left( -V \cdot\overline\nabla \chi_\rho + F \chi_\rho\right)\,d\mu_h \nonumber\\
& = \int_{\big\{\rho<r<\rho+\eps\big\}} V \cdot \tfrac{1}{\eps}\overline\nabla r \,d\mu_h
+\int_M F \chi_\rho\,d\mu_h. \label{MassLimit}
\end{align}
Now, a simple computation shows that if $q>(n-2)/2$, then $L^{n/2}_{-2q-2}\subset L^1$, and thus $F$ is integrable. Hence,  since $\chi_\rho$ pointwise approaches $1$, Lebesgue's dominated convergence theorem implies that $\int_M F \chi_\rho\,d\mu_h$ converges (as $\rho\to +\infty$) to $\int_M F \, d\mu_h$.

To prove Part 1 of the proposition, assume that $R_g$ is a finite signed measure outside a compact set. Then dominated convergence (for integration with respect to the signed measure $R_g$) and boundedness of $ \frac{d\mu_h}{d\mu_g}$ implies that $\left\llangle R_g, \chi_\rho \frac{d\mu_h}{d\mu_g} \right\rrangle$ converges as $\rho\to +\infty$. So by taking limits in \eqref{MassLimit}, we see that
\bel{MassAlt2}
\lim_{\rho\to +\infty} \frac{1}{\eps} \int_{\big\{\rho<r<\rho+\eps\big\}} V \cdot \overline\nabla r \,d\mu_h = \left\llangle R_g,  \frac{d\mu_h}{d\mu_g} \right\rrangle - \int_M F \,d\mu_h 
\ee
exists, is finite, and does not depend on $\eps$.

To prove Part 2 of the proposition, we now assume that $R_g$ is a measure outside a compact set.
Then $\left\llangle R_g, \chi_\rho \frac{d\mu_h}{d\mu_g} \right\rrangle$ is {\sl monotone} in $\rho$, and so it must have a limit (possibly infinite) as $\rho\to +\infty$. Once again equation \eqref{MassAlt2} holds, and the left-hand side (the mass) is finite if and only if $\left\llangle R_g, \frac{d\mu_h}{d\mu_g} \right\rrangle$ is finite if and only if  $R_g$ is a finite measure outside a compact set, since $\frac{d\mu_h}{d\mu_g}$ is bounded above and below by a positive number.
\end{proof}

From our definition and assuming sufficient regularity on the metric, we recover the definition in Bartnik \cite{Bartnik}.

\begin{corollary}\label{ClassicalMass}
Let $g$ be a $W^{2,p}_{-q}$ asymptotically flat metric with $p>n$ and $q\ge (n-2)/2$, and assume that the scalar curvature of $g$ is integrable. Then the generalized ADM mass coincides with the standard ADM mass.
\end{corollary}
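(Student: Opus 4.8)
The plan is to reduce the annulus‑averaged expression defining $\mADM(M,g)$ to the classical boundary‑flux integral over large coordinate spheres, using the coarea formula and the divergence theorem, and exploiting the extra regularity $W^{2,p}_{-q}$ with $p>n$.

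First I would record what the hypotheses give. Since $p>n$, the weighted Sobolev embedding yields that $g-h$ is $C^1$ on the end with $g-h=O(r^{-q})$ and $\nablab(g-h)=O(r^{-q-1})$ pointwise; hence $V$ from \eqref{defineV} is a continuous vector field with $V=O(r^{-q-1})$ near infinity, and $R_g,F$ are genuine locally integrable functions obeying the scalar curvature decomposition \eqref{scalar} almost everywhere. Because $R_g$ is integrable by assumption and, when $q>(n-2)/2$, one has $F\in L^{n/2}_{-2q-2}\subset L^1$ exactly as in the proof of Proposition~\ref{mass}, the function $\nablab_k V^k=R_g-F$ lies in $L^1(M,d\mu_h)$.

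Next, since $R_g\in L^1$ is in particular a finite signed measure, Part~1 of Proposition~\ref{mass} applies: the limit in \eqref{MassAlt} exists, is finite and $\eps$‑independent, so $\mADM(M,g)=m$, and by \eqref{MassAlt2} together with the identity $\llangle R_g,u\rrangle=\int_M R_g\,u\,d\mu_g$ (valid for the $W^{2,p}$ metric $g$) one gets $\mADM(M,g)=\frac{1}{2(n-1)\omega_{n-1}}\int_M\bigl(R_g-F\bigr)\,d\mu_h=\frac{1}{2(n-1)\omega_{n-1}}\int_M\nablab_k V^k\,d\mu_h$. Applying the divergence theorem on the sublevel sets $\{r<s\}\supset K$ (legitimate as $V\in\Wloc^{1,p}$ with $p>n$), letting $s\to\infty$ and using $\nablab_k V^k\in L^1$, this integral equals $\lim_{s\to\infty}\int_{\{r=s\}}V\cdot\nablab r\,dS$. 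Finally, in the asymptotically flat chart $h=\delta$, so $\nablab=\partial$ and $V^k=\partial_j g_{kj}-\partial_k g_{jj}+O(r^{-2q-1})$; hence $\int_{\{r=s\}}V\cdot\nablab r\,dS$ differs from the classical flux integral $\int_{\{r=s\}}(\partial_j g_{kj}-\partial_k g_{jj})\,\nu^k\,dS$ by a quantity that is $O(s^{\,n-2-2q})\to 0$ when $q>(n-2)/2$, so $\mADM(M,g)$ equals the classical ADM mass of \cite{Bartnik}.

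The delicate point, and the main obstacle, is the borderline decay rate $q=(n-2)/2$, which is still permitted in the statement. There $F$ and the quadratic remainder $V^k-(\partial_j g_{kj}-\partial_k g_{jj})$ need not be integrable or decaying on the end, so neither Proposition~\ref{mass} nor the naive comparison above applies directly; instead one must invoke the subtler fact, due to Bartnik, that under the integrability hypothesis on $R_g$ the flux integral over coordinate spheres still converges, and then show that the difference between the sphere fluxes of $V$ and of the linearized integrand tends to zero by tracking the cancellations forced by $R_g\in L^1$ rather than by estimating the terms individually. Once that is established, the coarea formula shows that the running annulus averages converge to the same limit, completing the identification with the standard ADM mass.
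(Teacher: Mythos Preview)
Your argument for the case $q>(n-2)/2$ is correct, though it takes a slightly different route from the paper. The paper observes directly that $V\in W^{1,p}_{-q-1}$ is H\"older continuous, so for each fixed $\rho$ the annulus average converges to the sphere flux $\int_{r=\rho}V\cdot\nablab r\,d\sigma_h$ as $\eps\to0$; it then argues (via the identity \eqref{MassLimit} from the proof of Proposition~\ref{mass}) that the convergence as $\rho\to\infty$ is uniform in $\eps$, so the two limits may be interchanged. Your detour through Proposition~\ref{mass}, the formula $m=\tfrac{1}{2(n-1)\omega_{n-1}}\int_M\nablab_k V^k\,d\mu_h$, and the divergence theorem reaches the same sphere flux, and from there both arguments simplify to the classical integrand in the same way. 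Your route is a bit more indirect but entirely valid.

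For the borderline $q=(n-2)/2$ you correctly spot that Proposition~\ref{mass} is unavailable, but the remedy you outline (``tracking the cancellations forced by $R_g\in L^1$'') is more elaborate than what is actually needed. The point you miss is that the weighted Sobolev embedding for $p>n$ yields \emph{little-$o$} decay, not merely big-$O$: one has $g_{ij}-\delta_{ij}=o(r^{-q})$ and $g_{ij,k}=o(r^{-q-1})$, as the paper records in its last sentence. Consequently the quadratic error $V^k-(\partial_j g_{kj}-\partial_k g_{jj})$ is $o(r^{-2q-1})$, and its flux through $\{r=\rho\}$ is $o(\rho^{\,n-2-2q})=o(1)$ even at the borderline. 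Combined with Bartnik's convergence of the classical sphere flux, this shows that $f(\rho):=\int_{r=\rho}V\cdot\nablab r\,d\sigma_h$ converges; since by the coarea formula the annulus average is just the running mean $\tfrac{1}{\eps}\int_\rho^{\rho+\eps}f(s)\,ds$ of this continuous convergent function, it converges to the same limit uniformly for $\eps$ in any bounded range, and the identification with the classical mass follows. No appeal to cancellations inside $R_g$ is required.
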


Observe here that if $g\in \Wloc^{2,p}$ with $p\ge1$, then $R_g\in \Lloc^1$, and hence $R_g$ (thought of as a distribution) being a finite measure outside a compact set is then equivalent to $R_g$ (thought of as a function) being integrable.

\begin{proof}
Under our assumption on $g$, the vector field $V \in W^{1,p}_{-q-1}$ and, in particular, is H\"{o}lder continuous. Therefore, for each $\rho>2$, we have 
$$
\lim_{\eps\to 0} \frac{1}{\eps} \int_{\rho<r<\rho+\eps} V \cdot \overline\nabla r \,d\mu_h 
= \int_{r=\rho} V \cdot \overline\nabla r \,d\sigma_h, 
$$
where $d\sigma_h$ is the induced volume measure on the sphere. Arguing as in the proof of Proposition~\ref{mass}, one can see that the convergence of $\frac{1}{\eps} \int_{\rho<r<\rho+\eps} V \cdot \overline\nabla r \,d\mu_h$ as $\rho\to +\infty$ is uniform in $\eps$, and since the limit (as $\eps\to0$) exists, we can change the order of the limits and obtain
\be
\label{eq:209bl}
\aligned
m
& ={1\over 2(n-1)\omega_{n-1}} \lim_{\rho\to +\infty} \int_{r=\rho} V \cdot \overline\nabla r \,d\sigma_h
\\
&={1\over 2(n-1)\omega_{n-1}}\lim_{\rho\to +\infty}\int_{r=\rho} \sum_{i,j=1}^n (g_{ij,i}-g_{ii,j})\nu_j d\sigma_h,
\endaligned
\ee
which is the usual definition of ADM mass, where  $\nu = \nablab r$ is the Euclidean outward unit normal. (Recall that $h=\delta$ in the exterior region.) The second equality in \eqref{eq:209bl} follows easily from equation \eqref{defineV} and the fact that $g_{ij}-\delta_{ij}=o(r^{-q})$ and $g_{ij,k}=o(r^{-q-1})$, via the weighted Sobolev embedding theorem \cite{PT}. 
\end{proof}


\section{Witten's argument with distributional curvature} 
\label{sec:3}

\subsection{Lichnerowicz--Weitzenb\"{o}ck formula for the Dirac operator}
\label{sec:31}

We are now in a position to present the core of the proof of Theorem \ref{pmt} in this section, while some technical material is postponed to the following section. We assume that $(M,\Phi)$ is a background $n$-manifold with a spin structure, endowed with background metric data $(h,r)$, and we suppose that $g$ is a $W^{1,n}_{-q}$ regular and asymptotically flat metric on $M$ with $q=(n-2)/2$. In view of the statement in Theorem \ref{pmt}, it is sufficient to assume the \emph{equality} $q=(n-2)/2$, and so we make this assumption from here on.

We are going to follow Witten's spinor proof \cite{Witten}, borrowing heavily from the expositions in Bartnik \cite{Bartnik} and Parker and Taubes \cite{PT}, while introducing additional arguments along the way, as will be required to cope with metrics with low regularity. 

We start by introducing a Dirac spinor bundle $S$ over $M$, constructed as follows. Start with an irreducible representation $\tau$ of the Clifford algebra $\cl(n)$, which is also a representation of $\spin(n)\subset\cl(n)$. The assumption that $M$ is spin means that there is a principal $\spin(n)$ bundle $E$ over $M$ that double covers the continuous frame bundle determined by $g$. Then, by definition, {\bf spinor bundle} $S$ is the bundle associated with $E$ with fiber determined by the representation $\tau$. Note that this construction implicitly defines an action of the Clifford bundle over $M$ on~$S$. 

In our setup, the role of the background metric must be clarified, as follows. 
Note that since the continuous metric $g$ is homotopic to the smooth background metric $h$, the topology of $S$ is independent of choice of $g$ as follows. There exists a unique self-adjoint isomorphism $b:TM\too TM$ with the property that $g(b^2 v, w)=h(v,w)$ for any $v,w\in TM$. This $b$ maps $h$-frames to $g$-frames. Since this map is homotopic to the identity, it lifts to a map from a principal spin bundle for $h$ to a principal spin bundle for $g$, thereby inducing an isomorphism $\beta$ from the spinor bundle for $h$, which we will simply call $S$, to the spinor bundle $S_g$ for $g$. For convenience, we use this isomorphism to \emph{identify} the two spinor bundles. That is, we define a smooth vector bundle $S$ using $h$ and then define the relevant $g$ quantities via pullback by $\beta$. 

In particular, if we pullback the Hermitian metric on $S_g$ to $S$, we just get the same Hermitian metric on $S$ that comes from $h$, so there is only one relevant Hermitian metric on $S$. We define the {\bf Clifford action $\tau$ via $g$} on $S$ by the formula
\[ 
\tau( bv)\psi = \bar{\tau}(v)\psi, 
\]
where $\bar{\tau}$ is the Clifford action via $h$. We can define the spin connection via $g$ on $S$ by $\nabla_{v} \psi := \beta^{-1} ( \nabla_v (\beta\psi))$, where the second $\nabla$ is the usual spin connection of $g$ on $S_g$. (From here on we will make no more mention of $S_g$ and work exclusively on $S$.)

We can compare the two spin connections as follows. If $\bar{e}_1,\ldots,\bar{e}_n$ is a local orthonormal frame for the metric $h$, then it lifts to a local orthonormal basis $\psi_1,\ldots,\psi_N$ of $S$ (with $N:=2n$).
 We say that these spinors are {\bf constant spinors} with respect to the given frame.
For each of these constant spinors, the {\bf spin connection of $h$} is given by
\bel{SpinConnection-h}
 \nablab\psi_\alpha = \tfrac{1}{4}\sum_{i, j =1 \atop i\neq j}^n \bar{\omega}_{ji} \otimes \bar{\tau}(\bar{e}_i \bar{e}_j)  \psi_\alpha,
 \ee
where $\bar{\omega}_{ji}$ are the connection $1$-forms of the metric $h$ with respect to the chosen basis $\bar{e}_1,\ldots,\bar{e}_n$. Note that the constant spinors $\psi_1,\ldots,\psi_N$ are also constant with respect to $\nabla$, the spin connection of $g$, so that  we have 
\bel{SpinConnection}
 {\nabla}\psi_\alpha = \tfrac{1}{4}\sum_{i, j =1 \atop i\neq j}^n {\omega}_{ji} \otimes {\tau}(e_i e_j)  \psi_\alpha,
 \ee
where $e_i:=b\bar{e}_i$, and
${\omega}_{ji}$ are the connection $1$-forms of the metric $g$ with respect to the chosen basis ${e}_1,\ldots,{e}_n$.

Furthermore, using our local frame again, the {\bf Dirac operator} (with respect to $g$) is defined on a spinor $\psi$ by the standard formula
\bel{eq:Dirac}
\D \psi = \sum_{i=1}^n \tau(e_i) \nabla_i \psi.
\ee
When there is no chance of confusion, we will suppress the variable $\tau$ in our notation.
Observe that since the metric $g$ is only assumed to be $C^0\cap W^{1,n}_\loc$, even if $\psi$ is smooth, one can only conclude that 
the left-hand sides of  \eqref{SpinConnection} and \eqref{eq:Dirac} have {\sl $L^n_\loc$ regularity.}  

One can define weighted Sobolev norms on $S$ using either $h$ or $g$. Although the Hermitian metric is the same, the difference comes from the fact that the volume measures are different and the spin connections are different. 

\begin{lemma}\label{equivalent}
Let $(M,\Phi)$ be a background $n$-manifold endowed with background metric data $(h,r)$.
Let $g$ be a $C^0\cap W^{1,n}_{-q}$ asymptotically flat metric on $M$, with $q=(n-2)/2$. Then the two $W^{1,2}_{-q}$ norms on $S$, defined with respect to $g$ and $h$, are equivalent.
\end{lemma}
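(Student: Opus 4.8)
The plan is to compare the two $W^{1,2}_{-q}$ norms term by term, reducing everything to the fact that $g-h \in W^{1,n}_{-q}$ together with the uniform two-sided bounds on $g$ that follow from $g \in C^0$ with $g^{-1} \in L^\infty_\loc$ and the asymptotic flatness $g \to h$ at infinity. First I would record the elementary but essential consequence of $C^0 \cap W^{1,n}_{-q}$ asymptotic flatness: there is a constant $C>1$ such that $C^{-1} h \le g \le C h$ as bilinear forms everywhere on $M$ (this uses compactness of $K$ and $g-h \to 0$ at infinity), and consequently the volume densities satisfy $C^{-1} \le \tfrac{d\mu_g}{d\mu_h} \le C$; likewise the bundle isomorphism $b$ and its inverse are uniformly bounded operators on $TM$, and $\beta, \beta^{-1}$ are uniformly bounded on $S$. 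Since the Hermitian metric on $S$ is literally the same for $g$ and $h$ (as stressed in the text), the pointwise fiber norm $|\psi|$ is identical in both pictures; only the volume measure and the connection differ.

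Next I would handle the $L^2_{-q}$ part of the norm. Here the integrand $|\psi|^2 r^{-2(-q)-n} = |\psi|^2 r^{2q-n}$ is the same function; the only difference is integration against $d\mu_g$ versus $d\mu_h$, and by the two-sided density bound above the resulting norms are equivalent with constants depending only on $C$. This disposes of the zeroth-order term immediately.

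The first-order term is the crux. I would write, using the constant-spinor frame of \eqref{SpinConnection-h}--\eqref{SpinConnection}, the pointwise difference $\nabla \psi - \nablab \psi$ in terms of the difference of connection $1$-forms $\omega_{ji} - \bar\omega_{ji}$, which is controlled by $\Gamma$, i.e.\ by first derivatives of $g-h$ (schematically $\omega - \bar\omega = O(g^{-1}) \cdot \nablab(g-h)$ plus terms involving $(b-\mathrm{Id})$ and $\nablab b$, all of which are bounded in $L^n_\loc$ and in fact lie in $L^n_{-q-1}$ by the asymptotic flatness hypothesis). Then $|(\nabla - \nablab)\psi|$ is pointwise bounded by $|\Gamma|\,|\psi|$ up to harmless bounded factors, so by the weighted H\"older inequality with exponents $\tfrac1n + (\tfrac1n + \tfrac{n-2}{2n}) + \tfrac12 \cdot \tfrac{2}{n}$... more simply: $\Gamma \in L^n_{-q-1}$, $\psi \in L^{2^*}_{-q}$ by weighted Sobolev embedding of $W^{1,2}_{-q}$ (valid since $n \ge 3$), and $\tfrac1n + \tfrac{1}{2^*} = \tfrac1n + \tfrac{n-2}{2n} = \tfrac12$, with the weights matching because $(-q-1) + (-q) = -2q - 1$ and $q = (n-2)/2$; this gives $(\nabla - \nablab)\psi \in L^2_{-q-1}$ with norm bounded by $C\|\psi\|_{W^{1,2}_{-q,h}}$. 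Hence $\|\nabla\psi\|_{L^2_{-q-1}}$ and $\|\nablab\psi\|_{L^2_{-q-1}}$ differ by at most a bounded multiple of the full $h$-norm, and combined with the already-established equivalence of the $L^2_{-q}$ pieces (and again the density bound for the outer measure), the two $W^{1,2}_{-q}$ norms are equivalent. By symmetry the reverse inequality follows identically.

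The main obstacle is the first-order estimate, specifically verifying that the pointwise bound $|(\nabla-\nablab)\psi| \lesssim |\Gamma|\,|\psi|$ genuinely holds with the connection difference lying in the right weighted Lebesgue space, and that the H\"older exponents and the weight exponents line up exactly when $q = (n-2)/2$ — this is precisely the borderline case, and it is why the hypothesis is stated with equality here. One must also be slightly careful that the frame-dependent formulas \eqref{SpinConnection-h}--\eqref{SpinConnection} patch together to a global estimate, which they do because the estimate is pointwise and $|\psi|$, $|\Gamma|$ are frame-independent. Everything else is bookkeeping with the uniform ellipticity constant $C$.
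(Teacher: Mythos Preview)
Your proposal is correct and follows essentially the same route as the paper: uniform two-sided metric bounds handle the $L^2_{-q}$ piece, the connection difference $A = \nabla - \nablab$ lies in $L^n_{-q-1}$, and weighted H\"older plus weighted Sobolev control $\|A\psi\|_{L^2_{-q-1}}$ by the full $W^{1,2}_{-q}$ norm. The only cosmetic differences are that the paper makes Kato's inequality explicit (to pass from $\||\psi|\|_{W^{1,2}_{-q}}$ to $\|\psi\|_{W^{1,2}_{-q}}$), and your weight arithmetic actually lands $A\psi$ in $L^2_{-2q-1}\subset L^2_{-q-1}$---so ``the weights matching'' is a slight misstatement, but the conclusion is unaffected.
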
 
\begin{proof}
Since $g$ and $h$ are uniformly bounded by each other, it is clear that the weighted $L^p$ norms defined with respect to $g$ and $h$ are equivalent. The issue is that the weighted $W^{1,2}$ norms involve derivatives of the metric. (Observe that, for $k>1$, the weighted $W^{k,p}$ norms with respect to $g$ are not even well-defined.)

The difference between the two spin connections $\nabla$ and $\nablab$ on $S$ must be some $\textrm{End}(S)$ valued one-form, say $A$. By our regularity assumption on $g$, $A$ must be $L^n_{-q-1}$ integrable, which can be seen explicitly from formulas \eqref{SpinConnection-h} and \eqref{SpinConnection}. For any smooth spinor $\psi$ and using an obvious notation for the norms, we compute
\begin{align*}
\| \psi \|_{W^{1,2}_{-q}(g)} &= \|  \psi \|_{L^2_{-q}(g)} + \| \nabla\psi \|_{L^2_{-q-1}(g)} \\ 
&= \|  \psi \|_{L^2_{-q}(g)} + \| \nablab\psi +A\psi \|_{L^2_{-q-1}(g)} \\ 
&\le \|  \psi \|_{L^2_{-q}(g)} + \| \nablab\psi \| _{L^2_{-q-1}(g)} 
 +\| A\psi \|_{L^2_{-q-1}(g)} \\ 
&\le C \| \psi \|_{W^{1,2}_{-q}(h)}  + C\| A\psi \|_{L^2_{-q-1}(h)} 
\end{align*}
for some constant $C>0$.
We estimate the latter term using the weighted H\"{o}lder inequality, the weighted Sobolev inequality, and regularity of $A$: 
\begin{align*}
 \| A\psi \|_{L^2_{-q-1}(h)} & \le \| A \|_{L^n_{-1}(h)}\,  \| \psi \|_{L^{\frac{2n}{n-2}}_{-q}(h)} \\ 
 &\le C   \left\| |\psi| \right\|_{W^{1,2}_{-q}(h)}
\\ 
&\le C   \|  \psi  \|_{W^{1,2}_{-q}(h)},    
\qedhere
\end{align*}  
where the last inequality follows from Kato's inequality.
\end{proof}

Recall that, in the standard smooth case, the Lichnerowicz--Weitzenb\"{o}ck formula reads 
\bel{eq:Lichne}
\D^2 \psi = \nabla^*\nabla \psi + \tfrac{1}{4}R_g \psi \quad \text{ (for sufficiently smooth $g, \psi$),}
\ee
where $\nabla^*$ is the formal adjoint of $\nabla$.
The main idea in Witten's proof of the positive mass theorem is to use an asymptotically constant
 solution $\psi$ of the Dirac equation, and then integrate the Lichnerowicz--Weitzenb\"{o}ck formula (for $\psi$) against $\psi$ itself. Using the non-negativity of $R_g$, Witten obtains a boundary integral at infinity with a sign, and then shows that the boundary integral is just the ADM mass. For our theorem, $g$ and $\psi$ \emph{cannot be differentiated twice,} and so we must integrate by parts \emph{before} applying the Lichnerowicz--Weitzenb\"{o}ck formula. Moreover, since the boundary integrals do not make sense, we need a cut-off function to mimic the standard behavior.

First of all, for any {\sl smooth} metric $g$ and {\sl smooth}  spinor $\psi$, integration of the Lichnerowicz--Weitzenb\"{o}ck formula against an arbitrary compactly supported smooth spinor $\phi$ followed by integration by parts yields
$$
 \int_M \langle \D\psi ,\D\phi\rangle\,d\mu_g =\int_M\left( \langle\nabla\psi, \nabla\phi\rangle + 
 {1 \over 4} R_g\langle \psi,\phi\rangle\right)\,d\mu_g,
$$
which we rewrite as
\bel{preLW}
0= \int_M \hskip-.12cm \left( - \langle \D\psi ,\D\phi\rangle +  \langle\nabla\psi, \nabla\phi\rangle
  -  {1 \over 4} \frac{d\mu_h}{d\mu_g} V \cdot\nablab \left(\langle\psi,\phi\rangle \frac{d\mu_g}{d\mu_h}\right)+ 
 {1 \over 4} F\langle \psi,\phi\rangle \right) d\mu_g.
\ee
This suggests the following formula that we will need for our proof.

\begin{proposition}[A Lichnerowicz-Weitzenb\"ock identity for metrics with distributional curvature]\label{distLW}
Assume that $g$ is a $C^0\cap \Wloc^{1,n}$ metric on a smooth $n$-manifold $M$. If $\psi$ and $\phi$ are $\Wloc^{1,2}$ spinors and $\phi$ has compact support, then  
\bel{LW}
0= - \langle \D\psi ,\D\phi\rangle_{L^2} +  \langle\nabla\psi, \nabla\phi\rangle_{L^2}
+ {1 \over 4} \left\llangle R_g, \langle\psi,\phi\rangle\right\rrangle,
 \ee
 where all quantities are computed using $g$.
\end{proposition}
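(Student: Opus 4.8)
The plan is to reduce the identity \eqref{LW} to the smooth case by a simultaneous approximation of the metric and of the two spinors. First I would observe that when $g$, $\psi$, $\phi$ are all smooth (with $\phi$ compactly supported), \eqref{LW} is \emph{exactly} the identity \eqref{preLW} recorded above: \eqref{preLW} is obtained by pairing the classical Lichnerowicz--Weitzenb\"ock formula \eqref{eq:Lichne} with $\phi$ and integrating $\D$ and $\nabla^*\nabla$ by parts, and unravelling its last two terms through the definition \eqref{distscalar} — using also $\int_M R_g\langle\psi,\phi\rangle\,d\mu_g=\llangle R_g,\langle\psi,\phi\rangle\rrangle$ for $C^2$ metrics — produces precisely $\tfrac14\llangle R_g,\langle\psi,\phi\rangle\rrangle$. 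So the entire content lies in the passage to the limit.

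Next I would fix a compact $\Omega\subset M$ containing $\spt\phi$ in its interior together with a slightly larger compact $\Omega'$, and choose smooth metrics $g_\delta\to g$ in $C^0\cap W^{1,n}(\Omega')$ (mollifying $g$ in coordinate charts and patching with a partition of unity; convex averaging keeps $g_\delta$ positive-definite for $\delta$ small) and smooth spinors $\psi_\delta\to\psi$, $\phi_\delta\to\phi$ in $W^{1,2}(\Omega')$ with $\spt\phi_\delta\subset\Omega$. As in Section~\ref{sec:3}, all of the $g_\delta$-objects are to be built on the \emph{single} bundle $S$ through $b_\delta$ and $\beta_\delta$. Since $b_\delta,b_\delta^{-1},g_\delta^{-1}$ converge to $b,b^{-1},g^{-1}$ in $C^0(\Omega')$ while $\nablab g_\delta\to\nablab g$ in $L^n(\Omega')$, I would deduce (exactly as in the proof of Lemma~\ref{equivalent}, reading off \eqref{SpinConnection-h}--\eqref{SpinConnection}) that the Clifford actions $\tau_\delta\to\tau$ in $C^0$, the connection differences $A_\delta\to A$ in $L^n$, the vector fields $V_\delta\to V$ in $L^n$, the scalars $F_\delta\to F$ in $L^{n/2}$, and $\tfrac{d\mu_{g_\delta}}{d\mu_h}\to\tfrac{d\mu_g}{d\mu_h}$ in $C^0$ with $\nablab\tfrac{d\mu_{g_\delta}}{d\mu_h}\to\nablab\tfrac{d\mu_g}{d\mu_h}$ in $L^n$, all over $\Omega'$. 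Because the Hermitian metric on $S$ does not depend on $\delta$, H\"older together with the Sobolev embedding $W^{1,2}(\Omega')\hookrightarrow L^{2n/(n-2)}(\Omega')$ also gives $\langle\psi_\delta,\phi_\delta\rangle\to\langle\psi,\phi\rangle$ in $L^{n/(n-2)}$ and $\nablab\langle\psi_\delta,\phi_\delta\rangle\to\nablab\langle\psi,\phi\rangle$ in $L^{n/(n-1)}$.

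Then I would apply the already-established smooth version of \eqref{LW} to $(g_\delta,\psi_\delta,\phi_\delta)$ and let $\delta\to0$ term by term. Writing $\D_\delta\psi_\delta=\sum_i\tau_\delta(e^\delta_i)(\nablab_i\psi_\delta+A_\delta\psi_\delta)$ and $\nabla_\delta\psi_\delta=\nablab\psi_\delta+A_\delta\psi_\delta$, the convergences above yield $\D_\delta\psi_\delta\to\D\psi$ and $\nabla_\delta\psi_\delta\to\nabla\psi$ in $L^2(\Omega)$, so the first two terms of \eqref{LW} converge to their counterparts for $(g,\psi,\phi)$. For the curvature term I would expand $\tfrac14\llangle R_{g_\delta},\langle\psi_\delta,\phi_\delta\rangle\rrangle$ by \eqref{distscalar} as $\tfrac14\int_\Omega\big(-V_\delta\cdot\nablab(\langle\psi_\delta,\phi_\delta\rangle\tfrac{d\mu_{g_\delta}}{d\mu_h})+F_\delta\langle\psi_\delta,\phi_\delta\rangle\tfrac{d\mu_{g_\delta}}{d\mu_h}\big)\,d\mu_h$ and, using the Leibniz rule and the \emph{same} H\"older triples $\tfrac1n+\tfrac{n-1}n=1$, $\tfrac1n+\tfrac{n-2}n+\tfrac1n=1$, $\tfrac2n+\tfrac{n-2}n=1$ that already appear in Proposition~\ref{prop:regul1}, conclude that this integrand converges in $L^1(\Omega)$ to that of $\tfrac14\llangle R_g,\langle\psi,\phi\rangle\rrangle$; if one prefers to keep real test functions one simply treats $\re\langle\psi,\phi\rangle$ and $\im\langle\psi,\phi\rangle$ separately. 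Passing to the limit then gives \eqref{LW}.

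The hard part will be the bookkeeping surrounding the metric-dependence of the spinor geometry: because the bundle identification, the Clifford action, the spin connection, and the volume density all move with the metric, one must check that the mollified data assemble into \emph{compatible} smooth structures on the fixed bundle $S$ and converge in precisely the topologies needed ($C^0$ for the algebraic/zeroth-order pieces, $L^n_{\loc}$ for the first-order ones). Once this is in place, no new analytic ingredient is required — every estimate is an instance of the elementary H\"older and Sobolev inequalities already used in Proposition~\ref{prop:regul1} and Lemma~\ref{equivalent}.
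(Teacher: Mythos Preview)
Your proposal is correct and follows essentially the same route as the paper: both reduce to the smooth identity \eqref{preLW} and pass to the limit using $C^0\cap W^{1,n}$-convergence of the metric (hence $L^n$-convergence of $V$ and the connection difference, $L^{n/2}$-convergence of $F$) together with $W^{1,2}$-convergence of the spinors, controlling the curvature term via the H\"older triples of Proposition~\ref{prop:regul1}. The only organizational difference is that the paper runs the density argument in two stages---first approximating $g$ while keeping $\psi,\phi$ smooth, then approximating the spinors with $g$ fixed---whereas you approximate all three simultaneously; this forces you to handle a couple of extra cross-terms like $(A_\delta-A)\psi_\delta$, but these are harmless for exactly the reasons you indicate.
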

 
\begin{proof} Once we establish that our hypotheses are strong enough to make sense of the equation \eqref{preLW}, the result in the proposition follows from a standard density argument. It is easy to see that $\D\psi$, $\D\phi$, $\nabla\psi$, and $\nabla\phi$ are all in $\Lloc^2$, and so the first two terms of the integrand are well-defined. Note also that $\psi, \phi\in \Wloc^{1,2}\subset \Lloc^{\frac{2n}{n-2}}$ by the Sobolev embeding theorem, and it easily follows from H\"{o}lder's inequality that $\langle\psi,\phi\rangle$ is in $\Lloc^{\frac{n}{n-2}}$ with derivatives in $\Lloc^{\frac{n}{n-1}}$. So by Proposition \ref{prop:regul1}, we see that $\left\llangle R_g, \langle\psi,\phi\rangle\right\rrangle$ is well-defined.

To apply the density argument, we first consider the case when $\psi$ and $\phi$ are smooth, but $g$ is not necessarily smooth. We choose a sequence of $C^2$ metrics $g_i$ converging to $g$ in $C^0 \cap W^{1,n}(K')$, where $K'$ denotes the support of $\phi$. We know that \eqref{LW} holds for each $g_i$, so we just need to show that we can take the limit of this equation to obtain \eqref{LW} for $g$. From formula \eqref{SpinConnection}, it is clear that the difference between the spin connections of $g_i$ and $g$ converges to zero in $L^n(K')$, so that the first two terms of \eqref{LW} converge as desired. The last term converges as desired, because $V_i$ converges to $V$ in $L^n(K')$ and $F_i$ converges to $F$ in $L^{n/2}(K')$.

For the general case, we choose a sequence of $C^2$ spinors $\phi_i$ converging to $\phi$ in $W^{1,2}$, such that the sequence is supported in some fixed compact set $K'$, and also a sequence of $C^2$ spinors $\psi_i$ converging to $\psi$ in $W^{1,2}(K')$. Note that the smoothness property is well-defined since $S$ is a smooth vector bundle, and there exist such sequences using the $W^{1,2}$ norm defined with respect to $h$. 
The calculations in the first paragraph above, and in the proof of Proposition \ref{prop:regul1}, show that if we take the limit of \eqref{preLW} for these $C^2$ spinors, we obtain \eqref{LW}.  To do this, we implicitly use Lemma \ref{equivalent}.
 \end{proof}


\subsection{Witten identity for distributional curvature}
\label{sec:32}

Choose $\eps>0$ and $\rho>2$, and let $\chi_\rho$ be the cut-off function \eqref{cut-off} defined in the proof of Proposition~\ref{mass}. Using $\chi_\rho\psi$ as our $\phi$ in equation (\ref{LW}), we can mimic the creation of the boundary term that appears in Witten's argument, as follows. By introducing the operator 
$$ 
L_i\psi:= \sum_{1\le j\le n \atop j\neq i} e_i e_j  \nabla_j \psi,
$$
we obtain the following identity.

\begin{lemma}\label{Lem:BoundaryTerm}
Let $(M,\Phi)$ be a background $n$-manifold endowed with background metric data $(h,r)$, and assume that $g$ is a $C^0\cap \Wloc^{1,n}$ metric on $M$. Then, for every spinor field $\psi\in\Wloc^{1,2}(S)$ and all $\eps>0$ and $\rho>2$, one has 
\bel{BoundaryTerm}
\aligned
& \frac{1}{\eps}\int_{\rho<r<\rho+\eps}
 \sum_{i=1}^n  \langle L_i\psi , \psi \rangle \nabla_i r \,d\mu_g
\\
& = -\langle \D\psi ,\chi_\rho \D\psi\rangle_{L^2}  +  \langle\nabla\psi, \chi_\rho\nabla\psi\rangle_{L^2}
+  {1 \over 4} \llangle R_g, \langle\psi,\chi_\rho\psi\rangle\rrangle.
\endaligned 
\ee
\end{lemma}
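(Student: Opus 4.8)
The plan is to apply Proposition~\ref{distLW} with the compactly supported test spinor $\phi = \chi_\rho \psi$, and then to identify the left-hand side of \eqref{BoundaryTerm} as exactly the extra term produced by the cut-off. First I would check that $\phi=\chi_\rho\psi$ is an admissible test spinor: since $\chi_\rho$ is Lipschitz and compactly supported and $\psi\in\Wloc^{1,2}(S)$, the product $\chi_\rho\psi$ lies in $W^{1,2}(S)$ and has compact support, so Proposition~\ref{distLW} applies. Substituting $\phi=\chi_\rho\psi$ into \eqref{LW} gives
\[
0 = -\langle \D\psi,\D(\chi_\rho\psi)\rangle_{L^2} + \langle\nabla\psi,\nabla(\chi_\rho\psi)\rangle_{L^2} + \tfrac14\llangle R_g,\langle\psi,\chi_\rho\psi\rangle\rrangle .
\]
The third term already matches the claimed identity, so the whole content is to expand the first two terms using the Leibniz rule $\D(\chi_\rho\psi)=\chi_\rho\D\psi + \tau(\nabla\chi_\rho)\psi$ and $\nabla(\chi_\rho\psi)=\chi_\rho\nabla\psi + d\chi_\rho\otimes\psi$, valid a.e.\ because $\chi_\rho$ is Lipschitz.

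The key algebraic step is then to collect the "extra" pieces coming from differentiating $\chi_\rho$. One gets
\[
-\langle\D\psi,\tau(\nabla\chi_\rho)\psi\rangle_{L^2} + \sum_{i=1}^n\langle\nabla_i\psi,(\nabla_i\chi_\rho)\psi\rangle_{L^2},
\]
and the task is to show this equals $\frac1\eps\int_{\rho<r<\rho+\eps}\sum_i\langle L_i\psi,\psi\rangle\nabla_i r\,d\mu_g$. Here I would use the pointwise identity $\nabla\chi_\rho = -\frac1\eps\,\nablab r$ on the annulus $\{\rho<r<\rho+\eps\}$ and zero elsewhere, so both extra terms are supported on the annulus, and the factor $\tfrac1\eps$ and the gradient $\nabla r$ (note $\nabla r$ and $\nablab r$ agree as the differential $dr$, a one-form) appear as in the statement. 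Writing $\nabla\chi_\rho$ in the local frame $e_1,\dots,e_n$ and using $\D\psi=\sum_i\tau(e_i)\nabla_i\psi$, the combination $-\tau(e_i)\tau(e_j)+\delta_{ij}$ for $i\neq j$ is precisely $-e_ie_j$ (Clifford relation $e_ie_j+e_je_i=-2\delta_{ij}$), which reassembles into $\sum_i\langle L_i\psi,\psi\rangle\nabla_i r$ after relabeling indices. The volume measure is $d\mu_g$ throughout, consistently with "all quantities computed using $g$."

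I expect the main obstacle to be purely bookkeeping rather than analytic: one must be careful that the Leibniz rules for $\D$ and $\nabla$ hold a.e.\ for merely Lipschitz $\chi_\rho$ and $\Wloc^{1,2}$ spinors (this is fine, since multiplication of a $\Wloc^{1,2}$ section by a Lipschitz scalar again lies in $\Wloc^{1,2}$ with the expected product-rule derivative), and that the Clifford-algebra manipulation correctly produces the operator $L_i$ with the right sign. A secondary point is the appearance of the Hermitian pairing: one uses that $\tau(e_i)$ is skew-Hermitian with respect to the Hermitian metric on $S$, so $-\langle\D\psi,\tau(\nabla\chi_\rho)\psi\rangle = \sum_{i,j}(\nabla_i\chi_\rho)\,\langle\nabla_j\psi,\tau(e_i)\tau(e_j)\psi\rangle$ — wait, more precisely $\langle \tau(e_j)\nabla_j\psi,\tau(e_i)\psi\rangle = -\langle \tau(e_i)\tau(e_j)\nabla_j\psi,\psi\rangle$ — and combining with the $\nabla\psi$ term, the diagonal $i=j$ contributions cancel and the off-diagonal ones build $L_i$. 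Once the algebra is laid out carefully the identity \eqref{BoundaryTerm} falls out directly, with no limiting argument needed beyond what is already packaged in Proposition~\ref{distLW}.
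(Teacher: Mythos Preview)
Your proposal is correct and follows essentially the same route as the paper: apply the distributional Lichnerowicz--Weitzenb\"ock identity (Proposition~\ref{distLW}) with $\phi=\chi_\rho\psi$, expand $\D(\chi_\rho\psi)$ and $\nabla(\chi_\rho\psi)$ via the Leibniz rule, use $\nabla\chi_\rho=-\tfrac{1}{\eps}\nabla r$ on the annulus, and reduce the resulting Clifford-algebra expression to $\sum_i\langle L_i\psi,\psi\rangle\nabla_i r$ using skew-Hermiticity of $\tau(e_i)$ and the relation $e_ie_i=-1$. The only caveat is a minor sign slip in your informal write-up (the ``extra terms'' you isolate equal the \emph{negative} of the annular integral, not the integral itself), but you already flag the sign as the delicate bookkeeping point, and once tracked carefully it matches \eqref{BoundaryTerm}.
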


\begin{proof} Using \eqref{LW} and for any $\psi\in \Wloc^{1,2}(S)$, we find 
\begin{align*}
 0&=- \langle \D\psi ,\D(\chi_\rho\psi)\rangle_{L^2} +  \langle\nabla\psi, \nabla(\chi_\rho\psi)\rangle_{L^2}
+  {1 \over 4} \left\llangle R_g, \langle\psi,\chi_\rho\psi\rangle\right\rrangle \\
&=  -\langle \D\psi ,\chi_\rho \D\psi\rangle_{L^2}
-   \sum_{i=1}^n \left\langle \D\psi, (\nabla_i \chi_\rho) e_i  \psi \right\rangle_{L^2} \\
&\quad 
 +  \langle\nabla\psi, \chi_\rho\nabla\psi\rangle_{L^2}
 +  \langle\nabla \psi, (\nabla \chi_\rho)\psi\rangle_{L^2}
+  {1 \over 4} \llangle R_g, \langle\psi,\chi_\rho\psi\rangle\rrangle,
\end{align*}thus 
$$
\aligned
0 &= -\langle \D\psi ,\chi_\rho \D\psi\rangle_{L^2}  +  \langle\nabla\psi, \chi_\rho\nabla\psi\rangle_{L^2}
+  {1 \over 4} \llangle R_g, \langle\psi,\chi_\rho\psi\rangle\rrangle\\
&\quad
+\frac{1}{\eps}\int_{\rho<r<\rho+\eps}
 \sum_{i=1}^n \left( \langle \D\psi ,(\nabla_i r) e_i  \psi \rangle -\langle \nabla_i\psi, (\nabla_i r)\psi\rangle
 \right)\,d\mu_g, 
\endaligned
$$
and therefore 
\begin{align*} 
&-\langle \D\psi ,\chi_\rho \D\psi\rangle_{L^2}  +  \langle\nabla\psi, \chi_\rho\nabla\psi\rangle_{L^2}
+  {1 \over 4} \llangle R_g, \langle\psi,\chi_\rho\psi\rangle\rrangle\\
&= \frac{1}{\eps}\int_{\rho<r<\rho+\eps}
 \sum_{i,j=1}^n \left( \langle e_j  \nabla_j \psi , e_i  \psi \rangle -\langle \delta_{ij}\nabla_j\psi, \psi\rangle
 \right)\nabla_i r \,d\mu_g\\ 
&=\frac{1}{\eps}\int_{\rho<r<\rho+\eps}
 \sum_{i,j=1}^n  \langle -(e_i e_j+\delta_{ij})  \nabla_j \psi , \psi \rangle \nabla_i r \,d\mu_g.
\end{align*}
\end{proof}

Let $\partial_1,\ldots,\partial_n$ be the standard basis associated with the asymptotically flat coordinate chart. By applying the Gram-Schmidt method to this basis, we find a $g$-orthonormal frame denoted by $e_1,\ldots, e_n$. As described earlier, this frame lifts to an orthonormal basis of constant spinors. If $\psi$ approaches a constant spinor at infinity sufficiently fast, then the left-hand side of equation \eqref{BoundaryTerm} differs from the integral in equation \eqref{mass} only by the integral of a (sufficiently) decaying function.
More precisely, we have the following lemma.

\begin{lemma} \label{BoundaryLimit}
Assume the hypotheses of Lemma \ref{Lem:BoundaryTerm}, and also assume that $g$ is $W^{1,n}_{-q}$ asymptotically flat with $q=(n-2)/2$.
Let $\psi_0$ be a constant spinor defined over the asymptotically flat coordinate chart, as described above. Then for  any $\rho>2$, $\eps>0$, and any spinor $\psi$ with $\psi-\psi_0 \in W^{1,2}_{-q}(S)$, one has 
$$
  \frac{1}{\eps}\int_{\rho<r<\rho+\eps}
 \sum_{i=1}^n  \langle L_i \psi , \psi \rangle \nabla_i r \,d\mu_g
= \frac{|\psi_0|^2}{4\eps}\int_{\rho<r<\rho+\eps} V\cdot\overline\nabla r\,d\mu_h + \int_{\rho<r<\rho+\eps} u\,d\mu_h
$$
for some function $u\in L^1_{-2q-1}$.
\end{lemma}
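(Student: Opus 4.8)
The plan is to insert the splitting $\psi = \psi_0 + \eta$, with $\eta := \psi - \psi_0 \in W^{1,2}_{-q}(S)$, into the integrand $\sum_i\langle L_i\psi,\psi\rangle\nabla_i r$ and to organize the resulting expression according to the number of factors of $\eta$ it contains: a \emph{principal} part (no $\eta$), two \emph{cross} parts (one $\eta$), and a \emph{quadratic} part (two $\eta$'s). As in the proof of Lemma~\ref{Lem:BoundaryTerm}, one rewrites $\sum_i\langle L_i\psi,\psi\rangle\nabla_i r$ through the operators $e_{\nabla r}\D(\cdot)$ and $\nabla_{\nabla r}(\cdot)$, which makes the integrations by parts below transparent. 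A recurring move throughout is that, after each manipulation, one passes from the spin connection $\nabla$ and volume measure $d\mu_g$ of $g$ to those of $h$; by formula \eqref{SpinConnection} this costs only error terms of the schematic form $|\overline\nabla g|\cdot(\text{spinor bilinear})$ and $(g-h)\cdot(\text{spinor bilinear})$, which will be collected into $u$ at the end, once they are shown to lie in $L^1_{-2q-1}$.

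For the principal part $\sum_i\langle L_i\psi_0,\psi_0\rangle\nabla_i r$, since $\psi_0$ is constant one has $\re\langle\nabla_{\nabla r}\psi_0,\psi_0\rangle = \tfrac12\nabla_{\nabla r}|\psi_0|^2 = 0$, so only $\re\langle e_{\nabla r}\D\psi_0,\psi_0\rangle$ survives. Expanding $\D\psi_0 = \sum_j e_j\nabla_j\psi_0$ by \eqref{SpinConnection}, using the Clifford relations and the vanishing of $\re\langle e_{a_1}\cdots e_{a_k}\psi_0,\psi_0\rangle$ whenever the indices are distinct with $k\ge 2$, all the higher-degree Clifford contributions cancel and one is left with $\tfrac14|\psi_0|^2\sum_{i,j}(g_{ij,i}-g_{ii,j})\nu_j$ together with a remainder quadratic in the connection coefficients. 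Because $h = \delta$ in the asymptotically flat chart (so $\overline\nabla = \partial$ there), comparison with \eqref{defineV} identifies $\tfrac14|\psi_0|^2\sum_{i,j}(g_{ij,i}-g_{ii,j})\nu_j$ with $\tfrac14|\psi_0|^2\,V\cdot\overline\nabla r$ up to remainders of the form $(g-h)\cdot\overline\nabla g$ (the same kind of remainder coming also from the factor $d\mu_g/d\mu_h$ and from the difference between the Gram--Schmidt frame $e_i$ and the coordinate frame $\partial_i$). These remainders are then absorbed into $u$.

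The quadratic part $\sum_i\langle L_i\eta,\eta\rangle\nabla_i r$ is bounded pointwise by $C|\overline\nabla\eta|\,|\eta|$ (after switching $\nabla\to\overline\nabla$, which produces only a further $|\overline\nabla g|\,|\eta|^2$ error), and the weighted Hölder inequality $L^2_{-q-1}\cdot L^2_{-q}\subset L^1_{-2q-1}$, valid since $\eta\in W^{1,2}_{-q}$, places it directly in $L^1_{-2q-1}$. The cross parts are the heart of the matter. Taking the real part of their sum and using the compatibility of $\nabla$ with the Hermitian metric, the pieces not involving $\D$ combine into the total derivative $\nabla_{\nabla r}\re\langle\psi_0,\eta\rangle$, while the $\D\eta$-piece is integrated by parts via $\langle e_je_{\nabla r}\psi_0,\nabla_j\eta\rangle = \nabla_j\langle e_je_{\nabla r}\psi_0,\eta\rangle - \langle\nabla_j(e_je_{\nabla r}\psi_0),\eta\rangle$, and the $\langle e_{\nabla r}\D\psi_0,\eta\rangle$-piece is already of the form $|\overline\nabla g|\,|\eta|$. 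On integrating over the annulus $\{\rho<r<\rho+\eps\}$, the two resulting divergences produce boundary integrals over $S_\rho$ and $S_{\rho+\eps}$; using $e_{\nabla r}^2 = -|\nabla r|^2$ one checks that these spherical contributions cancel \emph{exactly} — consistently with the fact that in the flat model $g=h=\delta$ (with $\psi_0$ literally constant) the entire cross contribution integrates to zero — and, likewise, the Euclidean-Hessian/$\Delta r$-type bulk terms recombine to zero in that model.

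What is then left over from the cross parts is a sum of genuine error terms, each of the schematic form $\big(|\overline\nabla g| + |g-h|\,r^{-1}\big)\,|\eta| + |g-h|\,|\overline\nabla\eta|$, obtained by subtracting the flat computation. The step I expect to be the main obstacle is showing that all of these leftover errors (and the $(g-h)\,\overline\nabla g$ remainders from the principal part) lie in $L^1_{-2q-1}$ at the borderline weight $q = (n-2)/2$: this requires balancing, in the weighted Hölder inequality, the $L^2_{-q}$ and (weighted-Sobolev) $L^{2n/(n-2)}_{-q}$ control on $\eta$, the $L^n_{-q-1}$ control on $\overline\nabla g$, and the $C^0\cap L^n_{-q}$ control on $g-h$ together with its decay at infinity, and the bookkeeping is tight precisely because $q$ sits at its critical value — which is also why the conclusion is phrased as an $L^1_{-2q-1}$ bound on the error rather than as a pointwise decay estimate. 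Collecting the principal, quadratic, and cross contributions, dividing by $\eps$ and integrating over the annulus yields the asserted identity, with $u$ equal to the sum of all the error terms produced along the way.
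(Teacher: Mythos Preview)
Your proposal follows the same broad strategy as the paper --- split $\psi=\psi_0+\eta$, compute the principal piece via the spin-connection formula \eqref{SpinConnection}, dispose of the quadratic piece directly, and integrate the cross piece by parts --- and it would ultimately succeed, but your organization of the cross terms is considerably more laborious than what the paper actually does.  You rewrite $\sum_i\langle L_i\psi,\psi\rangle\nabla_i r$ through $e_{\nabla r}\D$ and $\nabla_{\nabla r}$, which forces you to produce \emph{two} divergences (one from $\nabla_{\nabla r}\re\langle\psi_0,\eta\rangle$, one from $\sum_j\nabla_j\langle e_je_{\nabla r}\psi_0,\eta\rangle$) whose boundary integrals must be matched against each other, and in addition generates bulk terms of Hessian/$\Delta r$ type that you then need to show recombine to zero.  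The paper avoids all of this by keeping the $L_i$-form intact and using a single integration by parts on $\langle e_ie_j\nabla_j\xi,\psi_0\rangle$: after the skew-Hermitian flip to $-\langle\nabla_j\xi,e_ie_j\psi_0\rangle$, the divergence theorem produces the sole boundary term $\int_{\partial\Omega}\sum_{i\ne j}\langle\xi,e_ie_j\psi_0\rangle(\nabla_i r)\nu_j\,d\sigma_g$, which vanishes \emph{pointwise} on each sphere by antisymmetry in $i,j$ (because $\nu=\nabla r$), and the companion bulk term $\langle\xi,e_ie_j\psi_0\rangle\nabla_j\nabla_i r$ vanishes for the same reason.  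So the paper never has to balance two separate boundary contributions or invoke any $\Delta r$ cancellation; the antisymmetry of $e_ie_j$ against the symmetric factor $(\nabla_i r)\nu_j$ does all the work at once.  Your route is not wrong, but the extra cancellations you anticipate as ``the main obstacle'' are entirely self-inflicted by the $e_{\nabla r}\D/\nabla_{\nabla r}$ reorganization.
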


\begin{proof}
Assume the hypotheses in the lemma and, for convenience, let us use the notation $O(L^1_{-2q-1})$ for an unnamed function in $L^1_{-2q-1}$. Consider the spinor $\xi := \psi-\psi_0\in W^{1,2}_{-q} $ and the annulus $\Omega := \{x\,|\,  \rho<r(x)<\rho+\eps\}$. Let $I$ be the left-hand side of equation \eqref{BoundaryTerm} that we would like to simplify. There are a few facts that we will need for the computation. First, observe that the Clifford action of $e_i e_j$ with $i\neq j$ on spinors is skew-Hermitian. Also note that the $g$-frame $e_1,\ldots,e_n$ and the coordinate basis $\partial_1,\ldots,\partial_n$ differ, but one has the decay property $e_i-\partial_i \in W^{1,n}_{-q}$. 
Using this property, one can compute the $1$-form connections $ \omega_{\ell k}$ 
\bel{ConnectionForm}
 \omega_{\ell k}(e_j) = \tfrac{1}{2}(g_{j\ell,k}-g_{jk,\ell}) + O(L^{n/2}_{-2q-1}), 
 \ee
where the right-hand quantities are computed using the coordinate basis.

Next, we compute
\begin{align*}
\eps I&:= \int_{\Omega}
 \sum_{i=1}^n  \langle L_i\psi , \psi \rangle \nabla_i r \,d\mu_g \\
& = \int_{\Omega}
 \sum_{i=1}^n  \left( \langle L_i\psi_0 , \psi_0 \rangle + \langle L_i\xi , \psi_0 \rangle + \langle L_i\psi , \xi \rangle \right)  \nabla_i r \,d\mu_g \\
& = \int_{\Omega}
 \sum_{i=1}^n  \Big(  \left\langle L_i \psi_0 , \psi_0 \right\rangle + \sum_{ 1\le j\le n \atop j\neq i} \left\langle e_ie_j \nabla_j \xi , \psi_0 \right\rangle +   O(L^1_{-2q-1}) \Big)  \nabla_i r \,d\mu_g, 
\end{align*}
thus 
\begin{align*}
\eps I&
 = \int_{\Omega}
 \sum_{i=1}^n  \Big(  \left\langle L_i\psi_0 , \psi_0 \right\rangle - \sum_{ 1\le j\le n \atop j\neq i} \langle\nabla_j \xi ,  e_i e_j \psi_0 \rangle +   O(L^1_{-2q-1}) \Big)  \nabla_i r \,d\mu_g \\
& = \int_{\Omega}
 \sum_{i=1}^n    \left\langle L_i \psi_0 , \psi_0 \right\rangle  \nabla_i r \,d\mu_g
 + \int_\Omega O(L^1_{-2q-1})\,d\mu_h \\
 &\quad
 +\int_{\Omega}
 \sum_{\substack{i,j=1 \\  i\neq j}}^n  \Big( 
   - \nabla_j (  \langle \xi ,  e_i e_j \psi_0 \rangle \nabla_i r) +  \langle \xi ,  e_i e_j \nabla_j \psi_0 \rangle \nabla_i r
\\
& \hskip2.5cm + \langle \xi ,  e_i e_j \psi_0 \rangle\nabla_j \nabla_i r\Big) \,d\mu_g, 
\end{align*}
where we started to integrate the second term by parts. So, we find  
\begin{align*}
\eps I
& = \int_{\Omega}
 \sum_{i=1}^n    \left\langle L_i \psi_0 , \psi_0 \right\rangle  \nabla_i r  \,d\mu_g   
\\
& \quad +\int_\Omega O(L^1_{-2q-1}) \,d\mu_h
   -\int_{\partial\Omega}  \sum_{\substack{i,j=1 \\  i\neq j}}^n  \langle \xi ,  e_i e_j \psi_0 \rangle (\nabla_i r) \nu_j\,d\sigma_g,
\end{align*}
where the last integral was simplified using the  divergence theorem (for Sobolev functions) on the 
first term, the decay of the second term, and the anti-symmetry (in order to see that the last term vanishes). Next, by the equation \eqref{SpinConnection} and by using anti-symmetry on the last term( since $\nu=\nabla r$), it follows that 
$$
\aligned
\eps I 
& = \int_{\Omega}
\sum_{\substack{i,j,k,\ell=1 \\ i\neq j, k\neq\ell}}^n    \left\langle \tfrac{1}{4} \omega_{\ell k}(e_j)e_i e_j e_k e_\ell  \psi_0 , \psi_0 \right\rangle  \nabla_i r  \,d\mu_g 
  +\int_\Omega O(L^1_{-2q-1}) \,d\mu_h
\\
& = \int_{\Omega}\Big(
\sum_{i,j,k,\ell=1 }^n  \tfrac{1}{8}(g_{j\ell,k}-g_{jk,\ell})(-\delta_{ik}\delta_{j\ell}+\delta_{i\ell}\delta_{jk})|\psi_0|^2
\\
&\qquad \qquad
+   \sum_{i,j,k,\ell\text{ distinct}}^n  \left\langle  \tfrac{1}{8}(g_{j\ell,k}-g_{jk,\ell})e_i e_j e_k e_\ell  \psi_0 , \psi_0 \right\rangle  
\Big)\nabla_i r  \,d\mu_g 
\\
& \quad 
  +\int_\Omega O(L^1_{-2q-1}) \,d\mu_h
\endaligned
$$
by equation \eqref{ConnectionForm}, and by using the skew-Hermitian property, 
$$
\aligned
\eps I 
&= \frac{|\psi_0|^2 }{4}\int_{\Omega}
 \sum_{i,j=1}^n (g_{ij,j}-g_{jj,i})\nabla_i r  \,d\mu_g 
  +\int_\Omega O(L^1_{-2q-1}) \,d\mu_h, 
\endaligned
$$
where the second term vanished by anti-symmetry. Finally, we obtain 
$$
\aligned 
\eps I 
&= \frac{|\psi_0|^2 }{4} \int_{\Omega} 
V \cdot \overline\nabla r  \,d\mu_h 
 +\int_\Omega O(L^1_{-2q-1}) \,d\mu_h,
\endaligned
$$
where we used equation \eqref{defineV} and the decay of the error term in the last line.
\end{proof}


\section{Dirac equations with distributional curvature}
\label{sec:4}

\subsection{A technical property}
\label{sec:41}

The technical step in Witten's proof of the positive mass theorem is to find an asymptotically constant spinor that solves the Dirac equation and we need to revisit this construction when the metric has very low regularity. First, we prove a simple technical lemma. 

\begin{lemma}\label{subconverge}
Given a background metric $h$ for asymptotically flat manifold, if $u\in L^1_{-2q-1}$ where $q=(n-2)/2$,  
then for any $\eps>0$, there exists a sequence $\rho_i\to +\infty$ such that
$$
\lim_{i\to +\infty}  \int_{\rho_i<r<\rho_i+\eps} |u|\,d\mu_h =0,
$$
that is, the integral of $|u|$ over the annuli sub-converges to zero.
\end{lemma}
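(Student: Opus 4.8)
The plan is to reduce everything to the single observation that, with the critical exponent $q=(n-2)/2$, membership in $L^1_{-2q-1}$ is exactly the condition $\int_M |u|\,r^{-1}\,d\mu_h<\infty$. Indeed, unwinding the definition of the weighted norm, $\|u\|_{L^1_{-2q-1}(M)}=\int_M |u|\,r^{-(1)(-2q-1)-n}\,d\mu_h=\int_M |u|\,r^{2q+1-n}\,d\mu_h$, and $2q+1-n=-1$ precisely when $q=(n-2)/2$. So the hypothesis translates into $\int_M |u|\,r^{-1}\,d\mu_h<\infty$; in particular the annular quantity $a(\rho):=\int_{\rho<r<\rho+\eps}|u|\,d\mu_h$ is finite and non-negative for every $\rho>2$.

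Next I would argue by contradiction. Since $a\ge 0$, the existence of a sequence $\rho_i\to+\infty$ with $a(\rho_i)\to 0$ is equivalent to $\liminf_{\rho\to+\infty}a(\rho)=0$. So if the lemma fails there exist $\delta>0$ and $R_0>2$ with $a(\rho)\ge\delta$ for all $\rho\ge R_0$.

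I would then slice the end $\{r>R_0\}$ into the pairwise disjoint annuli $A_k:=\{R_0+k\eps<r<R_0+(k+1)\eps\}$ for $k\in\nn$. On $A_k$ one has $r<R_0+(k+1)\eps$, hence $\int_{A_k}|u|\,r^{-1}\,d\mu_h\ge (R_0+(k+1)\eps)^{-1}\int_{A_k}|u|\,d\mu_h\ge \delta\,(R_0+(k+1)\eps)^{-1}$. Summing over $k$ and comparing with the harmonic series, $\int_{r>R_0}|u|\,r^{-1}\,d\mu_h\ge \delta\sum_{k\ge 0}(R_0+(k+1)\eps)^{-1}=+\infty$, which contradicts $\int_M |u|\,r^{-1}\,d\mu_h<\infty$. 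This finishes the proof.

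There is no genuine obstacle here: the only point requiring any care is the bookkeeping of the weight exponent, so that the borderline weight $r^{-1}$ — and with it the logarithmic (harmonic) divergence — is the one that appears, together with the trivial fact that on the $k$-th annulus $r^{-1}$ is bounded below by a constant times $1/k$. Everything else is the routine "if every annular mass stays bounded below, the $r^{-1}$-weighted integral diverges" estimate.
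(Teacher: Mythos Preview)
Your proof is correct and rests on the same key observation as the paper's: with $q=(n-2)/2$ the hypothesis $u\in L^1_{-2q-1}$ unwinds to $\int_M |u|\,r^{-1}\,d\mu_h<\infty$, and one then plays this off against the harmonic divergence $\sum_k 1/(R_0+k\eps)=\infty$ over disjoint $\eps$-annuli.

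The packaging differs slightly. You argue by contradiction: if every annular mass $a(\rho)$ stays bounded below by $\delta$, the $r^{-1}$-weighted integral dominates a harmonic tail and diverges. The paper instead argues constructively: it writes the convergent series $\sum_i \int_{A_i}|u|\,r^{-1}\,d\mu_h$, introduces an auxiliary divergent sequence $\delta_i$ with $(\rho_i+\eps)\delta_i\to 0$, notes that a convergent series must drop below a divergent one infinitely often, and on that subsequence bounds $\int_{A_{i_j}}|u|\,d\mu_h\le(\rho_{i_j}+\eps)\delta_{i_j}\to 0$. Your route is the more elementary of the two, avoiding the auxiliary sequence entirely; the paper's gains nothing in strength but is direct rather than by contradiction.
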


\begin{proof}
Our assumption on $u$ is that $\int_M |u| r^{-1}\,d\mu_h <+\infty$. Define $\rho_i = 2+ i\eps$. Then we have
$$ \sum_{i=1}^\infty  \int_{\rho_i<r<\rho_i+\eps} |u| r^{-1}\,d\mu_h <+\infty.$$
Choose any positive sequence $\delta_i$ whose series diverges slower than the harmonic series in the sense that 
$\sum_{i=1}^\infty \delta_i=\infty$ while $(\rho_i +\eps)\delta_i \to 0$. Since any convergent series must be smaller than any divergent series on an infinite number of terms, there exists a subsequence such that 
 $$\int_{\rho_{i_j}<r<\rho_{i_j}+\eps} |u| r^{-1}\,d\mu_h < \delta_{i_j}.
$$
So we can write
\begin{align*}
 \int_{\rho_{i_j}<r<\rho_{i_j}+\eps} |u|\,d\mu_h 
 &=  \int_{\rho_{i_j}<r<\rho_{i_j}+\eps}r |u| r^{-1} \,d\mu_h \\
&\le (\rho_{i_j} +\eps) \int_{\rho_{i_j}<r<\rho_{i_j}+\eps} |u| r^{-1} \,d\mu_h
\le (\rho_{i_j} +\eps) \delta_{i_j},
\end{align*}
which converges to zero.
\end{proof}


\subsection{Existence result for the Dirac equation}
\label{sec:42}
Recall from Lemma \ref{equivalent} that $W^{1,2}_{-q}(S)$, as well as  $L^{2}_{-q-1}(S)$, can be defined using either $g$ or $h$. Throughout this section, we find it more convenient to use the volume form associated with the metric $g$. 
The following proposition allows us to solve the Dirac equation.

\begin{proposition}\label{DiracIsomorphism}
Under the assumptions in Theorem \ref{pmt} and with $q=(n-2)/2$, the operator
$$
\D : W^{1,2}_{-q}(S)\too L^{2}_{-q-1}(S)
$$
is an isomorphism. Note that  $L^{2}_{-q-1}(S)=L^2(S)$ with this choice of $q$.
\end{proposition}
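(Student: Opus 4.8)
The plan is to establish the isomorphism via the standard functional-analytic scheme for Witten spinors, adapted to the low-regularity setting: show that $\D$ is bounded on the weighted spaces, that it is semi-Fredholm via a coercivity (Poincaré-type) estimate, and then that both kernel and cokernel vanish. First I would verify boundedness: since $g$ is $C^0\cap W^{1,n}_{-q}$, the spin connection $\nabla$ differs from $\nablab$ by an $\End(S)$-valued one-form $A\in L^n_{-1}$ (as in the proof of Lemma~\ref{equivalent}), so the weighted Hölder and Sobolev inequalities give $\|\D\psi\|_{L^2_{-q-1}}\le C\|\psi\|_{W^{1,2}_{-q}}$; boundedness of $\D^2$ in an appropriate sense is not needed because we never differentiate twice. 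Second, I would prove the key a priori estimate
$$
\|\psi\|_{W^{1,2}_{-q}} \le C\big(\|\D\psi\|_{L^2_{-q-1}} + \|\psi\|_{L^2(K)}\big)
$$
for some compact set $K$, which combined with the compact embedding $W^{1,2}_{-q}(K)\hookrightarrow L^2(K)$ shows $\D$ has closed range and finite-dimensional kernel (semi-Fredholm). The estimate splits into an interior elliptic estimate for the first-order operator $\D$ — which still holds since the coefficients are in $L^n_{\loc}$, enough for the $W^{1,2}$-elliptic theory after the usual Gårding/Friedrichs argument — and a weighted estimate near infinity, where $g$ is close to the Euclidean metric and one uses the model operator on $\rr^n\smallsetminus B_1$ together with smallness of $g-h$ in $W^{1,n}_{-q}$ to absorb error terms.

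Third, for injectivity I would use the distributional Lichnerowicz--Weitzenböck identity of Proposition~\ref{distLW} with $\phi=\chi_\rho\psi$, exactly as packaged in Lemma~\ref{Lem:BoundaryTerm}: if $\D\psi=0$ and $\psi\in W^{1,2}_{-q}(S)=W^{1,2}_{-q}$, then $\psi-\psi_0$ for $\psi_0=0$ decays, so Lemma~\ref{BoundaryLimit} gives
$$
0=\langle\nabla\psi,\chi_\rho\nabla\psi\rangle_{L^2}+\tfrac14\llangle R_g,\langle\psi,\chi_\rho\psi\rangle\rrangle
=\frac{1}{4\eps}\int_{\rho<r<\rho+\eps}u\,d\mu_h
$$
with $u\in L^1_{-2q-1}$; letting $\rho\to\infty$ along the sub-converging sequence furnished by Lemma~\ref{subconverge}, and using $R_g\ge 0$, forces $\nabla\psi\equiv 0$, hence $|\psi|$ is constant, hence $\psi\equiv 0$ by the decay $\psi\in L^2_{-q}$. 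Surjectivity then follows because $\D$ is formally self-adjoint (with respect to the $g$-volume form, as fixed at the start of Section~\ref{sec:42}) and the weighted spaces $W^{1,2}_{-q}$ and $L^2_{-q-1}=L^2$ are dual in the appropriate pairing, so the cokernel of $\D$ is identified with the kernel of $\D$ on the dual weighted space $W^{1,2}_{-q'}$ with $q'=n-2-q=q$ (self-dual at $q=(n-2)/2$, which is precisely why this weight is chosen), and the same Weitzenböck vanishing argument applies there.

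The main obstacle I expect is the coercivity/regularity estimate in the low-regularity regime: standard elliptic estimates for $\D$ are usually quoted for $W^{2,p}$ or at least continuous (indeed $W^{1,n}$ barely fails to embed in $L^\infty$) coefficients, and here one only has $\Gamma,A\in L^n_{\loc}$. The remedy is that $W^{1,2}\subset L^{2n/(n-2)}$, so a first-order term $A\psi$ with $A\in L^n$ is controlled in $L^2$ with a constant proportional to $\|A\|_{L^n}$ over a small ball — small by absolute continuity of the integral — which lets one absorb the variable part of the connection into the constant-coefficient model estimate; this is the same mechanism already used in Lemma~\ref{equivalent}, so the argument there should be promoted to a genuine elliptic estimate rather than merely a norm equivalence. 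A secondary technical point is making the integration-by-parts and duality steps rigorous purely within $W^{1,2}$, but Proposition~\ref{distLW} and Lemma~\ref{Lem:BoundaryTerm} have already done this work, so it should reduce to bookkeeping.
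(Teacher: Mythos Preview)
Your approach is sound in outline but takes a different and heavier route than the paper, and it runs head-on into the obstacle you correctly flag.

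The paper does not establish a semi-Fredholm estimate with a compact error term. Instead it extracts from the Weitzenb\"ock identity a \emph{global} coercivity estimate. Applying Lemma~\ref{Lem:BoundaryTerm} to an arbitrary $\phi\in W^{1,2}_{-q}(S)$, dropping the scalar-curvature term by $R_g\ge 0$, and invoking Lemma~\ref{subconverge} on the annular term (whose integrand is already in $L^1_{-2q-1}$ since $\phi\in L^2_{-q}$ and $\nabla\phi\in L^2_{-q-1}$; Lemma~\ref{BoundaryLimit} is not needed here), one obtains
\[
\|\nabla\phi\|_{L^2}\le\|\D\phi\|_{L^2}.
\]
Combined with a weighted Poincar\'e inequality this gives $\|\phi\|_{W^{1,2}_{-q}}\le C\|\D\phi\|_{L^2}$ with no $\|\phi\|_{L^2(K)}$ term and no appeal to low-regularity elliptic theory. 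This is strictly stronger than your estimate and already contains injectivity.

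For surjectivity the paper then observes that $\langle\D\omega,\D\phi\rangle_{L^2}$ is an equivalent inner product on $W^{1,2}_{-q}$, so Riesz representation produces, for any $\eta\in L^2$, a spinor $\omega\in W^{1,2}_{-q}$ with $\langle\D\omega,\D\phi\rangle_{L^2}=\langle\eta,\phi\rangle_{L^2}$ for all test $\phi$; setting $\xi=\D\omega$ gives a weak solution of $\D\xi=\eta$, and a short approximation argument (using the coercivity estimate again) promotes $\xi$ to $W^{1,2}_{-q}$.

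Your duality argument for surjectivity, by contrast, has a gap at precisely the borderline you anticipate. A cokernel element $\eta$ lies a priori only in $L^2$ and satisfies $\D\eta=0$ in the distributional sense; to run the Weitzenb\"ock vanishing you must first know $\eta\in W^{1,2}_{-q}$. That is an elliptic regularity statement for $\D$ with connection coefficients merely in $L^n$, and the naive bootstrap stalls: $\eta\in L^2$, $A\in L^n$ give $A\eta\in L^{2n/(n+2)}$, hence $\eta\in W^{1,2n/(n+2)}$, and Sobolev returns only $L^2$. The smallness/absorption trick you describe handles the \emph{a priori} direction (where $\psi\in W^{1,2}$ is assumed from the start) but does not, without further argument, upgrade weak $L^2$ solutions. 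The paper's Riesz/Lax--Milgram route sidesteps this entirely, which is exactly what makes the low-regularity proof go through without developing a separate elliptic theory.
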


Elliptic estimates for such an operator, assuming our level of regularity, are described in great detail and generality in the work of Bartnik and Chru\'{s}ciel \cite{Bartnik-Chrusciel}. However, in this case we can make a more direct argument, with
the Lichnerowicz-Weitzenb\"ock identity, combined with our assumption of non-negative scalar curvature, providing the necessary estimate.

\begin{proof}
First observe that the operator is clearly a well-defined bounded linear operator. Next, we will prove an injectivity estimate. Given any $\phi\in W^{1,2}_{-q}(S)$ and applying Lemma \ref{Lem:BoundaryTerm} with $\eps=1$ (though the choice of $\eps$ does not matter) to $\phi$, we have
\begin{align*}
& \int_{\rho<r<\rho+1}
 \sum_{i=1}^n  \langle L_i\phi , \phi \rangle \nabla_i r \,d\mu_g
\\
 &= -\langle \D\phi ,\chi_\rho \D\phi\rangle_{L^2}  +  \langle\nabla\phi, \chi_\rho\nabla\phi\rangle_{L^2}+\llangle R_g, \langle\phi,\chi_\rho\phi\rangle\rrangle\\
 &\ge  -\langle \D\phi ,\chi_\rho \D\phi\rangle_{L^2}  +  \langle\nabla\phi, \chi_\rho\nabla\phi\rangle_{L^2}, 
 \end{align*}
where $\chi_\rho$ is the cut-off function \eqref{cut-off} 
defined in the proof of Proposition \ref{mass} (with $\eps=1$) and we used the non-negativity of $R_g$. Since the integrand on the left-hand side is $L^1_{-2q-1}$, Lemma \ref{subconverge} tells us that the left-hand side subconverges to zero as $\rho\to +\infty$. Meanwhile, we know that both terms on the right-hand side converge, and therefore 
$$ \| \nabla \phi \|_{L^2} \le  \| \D \phi \|_{L^2}. $$
Combining this with a weighted Poincar\'{e} inequality (see, for example, the more general statement \cite[Theorem 9.5]{Bartnik-Chrusciel}), we obtain the desired injectivity estimate
$$ \| \phi \|_{W^{1,2}_{-q}} \le  \| \D \phi \|_{L^2}. $$

We now only have to prove surjectivity. Given any $\eta\in L^2(S)$, we need to find a spinor $\xi\in W^{1,2}_{-q}(S)$ solving $\D\xi=\eta$.
We have shown that the pairing $\langle \omega,\phi\rangle_H:=\langle \D\omega ,\D\phi\rangle_{L^2}$ is equivalent to the $W^{1,2}_{-q}$ Hilbert product of $\omega$ and $\phi$. Applying the Riesz representation theorem to the functional $\phi\mapsto \langle \eta,\phi\rangle_{L^2}$ for the Hilbert product $H$, there must exist some $\omega\in W^{1,2}_{-q}(S)$ with the property that
$$ \langle \D\omega ,\D\phi\rangle_{L^2} = \langle \eta,\phi\rangle_{L^2},$$
for every $\phi\in W^{1,2}_{-q}(S)$. We claim that $\xi=\D\omega$ is the desired solution. We know that $\xi\in L^2(S)$. To prove better regularity, let $\xi_j$ be a sequence of $W^{1,2}_{-q}$ spinors converging to $\xi$ in $L^2(S)$. For any test function $\phi\in W^{1,2}_{-q}(S)$, we obtain 
$$
\lim_{j\to +\infty} \langle \D\xi_j , \phi\rangle_{L^2} = \lim_{j\to +\infty}\langle \xi_j , \D \phi\rangle_{L^2}
=\langle \xi , \D \phi\rangle_{L^2}=\langle \eta , \phi\rangle_{L^2},
$$
by construction of $\xi$. Therefore $\D\xi_j$ converges to $\eta$ in the weak $L^2$ topology. In particular, $\| \D\xi_j\|_{L^2}$ is bounded independently of $j$. The injectivity estimate then implies that $\|\xi_j\|_{W^{1,2}_{-q}}$ is bounded. Therefore $\xi_j$ must converge to $\xi$ weakly in $W^{1,2}_{-q}$, and we finish the argument by observing that 
\begin{align*}
\langle \D\xi , \phi\rangle_{L^2} = \langle \xi , \D\phi\rangle_{L^2}=\langle \eta , \phi\rangle_{L^2}
\end{align*}
for any compactly supported spinor $\phi\in W^{1,2}_{-q}(S)$.
\end{proof}

\begin{corollary}\label{Dirac-solve}
Under the assumptions in Theorem \ref{pmt}, let $\psi_0$ be a constant spinor defined over the asymptotically flat coordinate chart, as described earlier. Then there exists a spinor $\psi\in\Wloc^{1,2}(S)$ such that $\D\psi=0$ and $\psi-\psi_0\in W^{1,2}_{-q}(S)$.
\end{corollary}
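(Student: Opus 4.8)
The plan is to extend a constant spinor $\psi_0$ to a global smooth spinor that is \emph{exactly} $\psi_0$ on the asymptotically flat end (outside some compact set) and arbitrary (say, zero) near $K$; call it $\psi_1$, so that $\psi_1 - \psi_0$ is compactly supported. Then I would compute $\D\psi_1$, which is a spinor supported in a compact set. Since the metric is $C^0\cap W^{1,n}_{\mathrm{loc}}$ and $\psi_1$ is smooth, formula \eqref{eq:Dirac} together with the connection formula \eqref{SpinConnection} shows that $\D\psi_1 \in L^n_{\mathrm{loc}}$; being compactly supported, it lies in $L^2(S) = L^2_{-q-1}(S)$ (using the Sobolev embedding $L^n_{\mathrm{comp}} \subset L^2_{\mathrm{comp}}$ since $n \ge 3$).

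Next I would apply Proposition~\ref{DiracIsomorphism}: since $\D : W^{1,2}_{-q}(S) \to L^2_{-q-1}(S) = L^2(S)$ is an isomorphism, there exists a unique $\xi \in W^{1,2}_{-q}(S)$ with $\D\xi = -\D\psi_1$. Setting $\psi := \psi_1 + \xi$, I get $\D\psi = \D\psi_1 + \D\xi = 0$ in the distributional (weak $L^2$) sense, and $\psi - \psi_0 = (\psi_1 - \psi_0) + \xi$ is the sum of a compactly supported smooth spinor and an element of $W^{1,2}_{-q}(S)$, hence lies in $W^{1,2}_{-q}(S)$. Finally, $\psi = \psi_1 + \xi \in \Wloc^{1,2}(S)$ since both summands are (using Lemma~\ref{equivalent} to pass freely between $g$- and $h$-defined norms, and noting $\psi_1$ is smooth hence locally in every Sobolev space). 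This gives the spinor asserted in the corollary.

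The only mildly delicate point — and the place I would be most careful — is verifying that $\D\psi_1$ genuinely lands in the space $L^2_{-q-1}(S)$ on which Proposition~\ref{DiracIsomorphism} is stated, i.e. that compact support upgrades the a priori $L^n_{\mathrm{loc}}$ regularity to the weighted $L^2$ target; this is immediate but should be stated. One should also note that the equation $\D\psi = 0$ holds in the weak sense built into the isomorphism statement (pairing against compactly supported test spinors), which is exactly the notion of solution used throughout Section~\ref{sec:3}, so no further interpretation is needed. Everything else is a routine bookkeeping of which Sobolev space each summand belongs to, so there is no real obstacle here — the corollary is essentially an immediate consequence of Proposition~\ref{DiracIsomorphism} once the reduction to a compactly supported source is set up.
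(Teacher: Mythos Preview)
There is a genuine gap: your claim that $\D\psi_1$ is compactly supported is false. The spinor $\psi_0$ is ``constant'' only in the sense that its components in the spinor basis lifted from the orthonormal frame are constant; it is \emph{not} $\nabla$-parallel for the $g$-spin connection. Formula~\eqref{SpinConnection} gives
\[
\nabla\psi_0=\tfrac{1}{4}\sum_{i\ne j}\omega_{ji}\otimes\tau(e_ie_j)\psi_0,
\]
where the $\omega_{ji}$ are the connection one-forms of $g$, and by~\eqref{ConnectionForm} these involve first derivatives of $g$ and do not vanish on the asymptotic end (since $g\ne h$ there). Hence $\D\psi_1=\D\psi_0\ne 0$ outside every compact set, and your subsequent step ``being compactly supported, it lies in $L^2(S)$'' does not apply. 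You appear to have conflated the $g$-Dirac operator $\D$ with the flat $h$-Dirac operator (for which $\psi_0$ \emph{is} parallel on the end).

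The paper's proof follows the same overall scheme --- solve $\D\xi=-\D\psi_0$ via Proposition~\ref{DiracIsomorphism} and set $\psi=\psi_0+\xi$ --- but the point you must supply is that $\D\psi_0\in L^2_{-q-1}(S)$ by \emph{decay} rather than by compact support: on the end, $\nabla\psi_0=A\psi_0$ with $A=\nabla-\nablab$ controlled by the weighted norm of $\partial g$, and this is what places $\D\psi_0$ in the target space of Proposition~\ref{DiracIsomorphism}. Once you replace the compact-support assertion with this decay estimate, the remainder of your argument (applying Proposition~\ref{DiracIsomorphism}, setting $\psi=\psi_1+\xi$, and checking $\psi-\psi_0\in W^{1,2}_{-q}$) is correct and matches the paper.
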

\begin{proof}
It is easy to see that $\psi_0\in W^{1,2}_{-q}$, and thus $\D\psi_0 \in  W^{2}_{-q-1}(S)$. Apply the previous lemma to solve for $\xi$ in  $\D\xi=-\D\psi_0$, and then $\psi=\xi+\psi_0$ is the desired spinor.
\end{proof}

\begin{proof}[Proof of Theorem \ref{pmt}]
Now we simply put together all of our ingredients. Choose $\psi$ as in Corollary \ref{Dirac-solve} with $|\psi_0|=1$, and insert this $\psi$ into Lemma \ref{Lem:BoundaryTerm}, using Lemma \ref{BoundaryLimit} to simplify the annular integral on the left-hand side. Those three results combine to tell us that, for any $\eps>0$, 
\begin{align*}
\frac{1}{4\eps}\int_{\rho<r<\rho+\eps} V\cdot\overline\nabla r\,d\mu_h
& = -\langle \D\psi ,\chi_\rho \D\psi\rangle_{L^2}  +  \langle\nabla\psi, \chi_\rho\nabla\psi\rangle_{L^2}
\\
&\quad+  {1 \over 4} \llangle R_g, \langle\psi,\chi_\rho\psi\rangle\rrangle + \int_{\rho<r<\rho+\eps} u\,d\mu_h
\\
&= \langle\nabla\psi, \chi_\rho\nabla\psi\rangle_{L^2}
+  {1 \over 4} \llangle R_g, \langle\psi,\chi_\rho\psi\rangle\rrangle 
+ \int_{\rho<r<\rho+\eps} u\,d\mu_h, 
\end{align*}
where $\chi_\rho$ is the cut-off function \eqref{cut-off}. 
The first two terms on the right are non-negative and the last term subconverges to zero by Lemma \ref{subconverge}. Therefore, we have  
$$ 
\limsup_{\rho\to +\infty} \frac{1}{4\eps}\int_{\rho<r<\rho+\eps} V\cdot\overline\nabla r\,d\mu_h
= \| \nabla\psi\|^2_{L^2} +  {1 \over 4} \lim_{\rho\to +\infty} \llangle R_g, \langle\psi,\chi_\rho\psi\rangle\rrangle 
 \ge 0.
$$
Since $R_g$ is non-negative, we can apply Part 2 of Proposition \ref{mass} to see that the limit on the left exists (though is possibly infinite), and the result is proved.

For the rigidity, suppose that the mass is zero. The equality above then tells us that $\nabla\psi=0$. Since we can choose $\psi_0$ to be any constant spinor in the asymptotically flat coordinate chart, we can construct an entire basis of parallel spinors, which means that we have a parallel frame of the tangent bundle, which is only possible if $(M,g)$ is covered by Euclidean space, and the topology of $M$ ensures that the cover is trivial.  
\end{proof}


\section{Relation to earlier results} 
\label{sec:6}

In this final section, we show that our theory allows us to recover the earlier results for ``piecewise regular'' metrics by Miao \cite{Miao}, Shi and Tam \cite{ST}, and Lee \cite{Lee} in the spin case. 
Furthermore, it immediately follows from the Sobolev embedding theorem that, 
as far as comparatively more regular metrics are concerned,  
our Theorem \ref{pmt} clearly also generalizes \cite[Theorem 6.3]{Bartnik} (which assumes a $W^{2,p}_{-q}$ regular and asymptotically flat metric, with $p>n$ and $q\ge (n-2)/2$) and \cite{Grant-Tassotti} (which assumes a $\Wloc^{2,n/2}$ metric that is $C^2$ and asymptotically flat outside a compact set).

In view of the following proposition (by taking $a=0$ in the statement below), from our Theorem \ref{pmt} we immediately recover Theorem 1 in \cite{Miao}, Theorem 3.1 in \cite{ST}, and the main theorem in \cite{MFS} in the spin case. 
Our sign convention here is such that the mean curvature of a round sphere in Euclidean space points inward. 

\begin{proposition}[Metrics that are singular along a hypersurface]
\label{gluing}
Let $M_1$ and $M_2$ be smooth $n$-manifolds with boundaries, carrying $C^2$ Riemannian metrics $g_1$ and $g_2$, respectively.  Assume that there is an isometry $\Phi:(\partial M_1,g_1)\too(\partial M_2,g_2)$.  Let $(M,g)$ be the manifold obtained by
 gluing $(M_1,g_1)$ and $(M_2,g_2)$ along $\Phi$.  Let $\Sigma$ be the identification\footnote{Note that $g$ is Lipschitz continuous globally on $M$, and is $C^2$ \emph{up to} the hypersurface $\Sigma$ but not necessarily so \emph{across} $\Sigma$.}
 of $\partial M_1$ and $\partial M_2$ in $M$.  
Let $H_1$ and $H_2$ be the mean curvature vectors of $\Sigma$, computed with respect to the metrics $g_1$ and $g_2$, respectively.  Assume that $g_1$ and $g_2$ both have scalar curvature bounded below by $a$, and that at each point of $\Sigma$, $H_1-H_2$ either points into $M_1$ or is zero.  Then, in the sense of Definition~\ref{def:22}, $g$ has distributional scalar curvature bounded below by $a$, that is, 
$R_g \geq a$. 
\end{proposition}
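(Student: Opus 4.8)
The plan is to compute $\llangle R_g, u\rrangle$ for an arbitrary nonnegative test function $u$ by splitting the integral over $M$ into the pieces lying in $M_1$ and $M_2$, working with a fixed smooth background metric $h$ that is smooth across $\Sigma$ (e.g. any smooth metric agreeing with $g$ away from a neighborhood of $\Sigma$), and showing that $\llangle R_g, u\rrangle \geq a\int_M u\,d\mu_g$. The key point is that on $M_i$ the metric $g$ equals the $C^2$ metric $g_i$, so all the fields $\Gamma$, $V$, $F$ and $d\mu_g/d\mu_h$ are $C^1$ (resp. $C^0$) up to the boundary $\Sigma$ from each side, but they jump across $\Sigma$. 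Hence in the defining formula \eqref{distscalar}, when I integrate the divergence-type term $-V\cdot\nablab\big(u\,\frac{d\mu_g}{d\mu_h}\big)$ over each side separately and integrate by parts, I pick up a boundary term over $\Sigma$ in addition to the classical scalar curvature integral $\int_{M_i} R_{g_i}\,u\,d\mu_{g_i}$.

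The main steps, in order: (1) Observe $g\in C^0\cap\Wloc^{1,\infty}\subset C^0\cap\Wloc^{1,n}$, so Definition~\ref{def:22} and Proposition~\ref{prop:regul1} apply; moreover, on each $M_i$ the quantity $\langle R_g,u\rangle$ restricted to test functions supported in the interior equals $\int R_{g_i}u\,d\mu_{g_i}\geq a\int u\,d\mu_g$, so only a neighborhood of $\Sigma$ needs attention. (2) Fix $h$ smooth on all of $M$. On $M_1$, integrate by parts: $\int_{M_1}\big(-V\cdot\nablab(u\,\tfrac{d\mu_g}{d\mu_h})+Fu\,\tfrac{d\mu_g}{d\mu_h}\big)d\mu_h = \int_{M_1}(\nablab_k V^k + F)\,u\,\tfrac{d\mu_g}{d\mu_h}\,d\mu_h - \int_\Sigma (V\cdot\bar\nu_1)\,u\,\tfrac{d\mu_g}{d\mu_h}\,d\sigma_h$, where $\bar\nu_1$ is the $h$-unit outward normal to $M_1$ along $\Sigma$; by the scalar curvature decomposition \eqref{scalar} the bulk integral is $\int_{M_1}R_{g_1}\,u\,d\mu_{g_1}$. (3) Do the same on $M_2$ with outward normal $\bar\nu_2=-\bar\nu_1$, and add. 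The bulk terms give $\int_M$ of something $\geq a\,u\,d\mu_g$; the two boundary contributions combine into a single integral over $\Sigma$ of $u\cdot(\text{jump of }V\cdot\bar\nu)\cdot\tfrac{d\mu_g}{d\mu_h}$, where the jump is $(V_1-V_2)\cdot\bar\nu_1$ with $V_i$ the vector field computed from $g_i$. (4) Identify this jump geometrically: using \eqref{defineV} and the fact that $g_1=g_2$ on $T\Sigma$ (so only the normal derivatives of $g$ jump), show that $(V_1-V_2)\cdot\bar\nu_1$ equals $2(n-1)$ times (the $g$-normalized version of) the difference of mean curvatures, i.e. it is proportional to $H_1-H_2$ measured against the inward $M_1$ normal, with the sign such that the hypothesis ``$H_1-H_2$ points into $M_1$ or is zero'' makes this term $\geq 0$. (5) Conclude $\llangle R_g,u\rrangle \geq a\int_M u\,d\mu_g$ for all $u\geq 0$, i.e. $R_g\geq a$.

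I expect step (4) — the precise identification of the distributional interface term with the jump in mean curvature, including getting the sign and the normalization factor right — to be the main obstacle. One must carefully relate $\Gamma^k_{ij}$ and hence $V^k = g^{ij}g^{k\ell}(\nablab_j g_{i\ell}-\nablab_\ell g_{ij})$ to the second fundamental form: since $g$ is continuous and tangentially $C^1$ across $\Sigma$, the tangential derivatives $\nablab_j g_{i\ell}$ with $j$ tangential do not jump, so only terms involving a normal derivative survive in $V_1-V_2$, and among those only the combination giving $\operatorname{tr}$ of the second fundamental form (times the normal) is left after contracting with $\bar\nu_1$; the trace-free part and the terms that would measure the ``normal-normal'' derivative cancel because $g$ itself (not just its tangential part) is continuous. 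A clean way to organize this is to choose, near a point of $\Sigma$, $h$-Fermi-type or $g$-adapted coordinates in which $\Sigma=\{x^n=0\}$ and $g_{in}=\delta_{in}$ on $\Sigma$ from both sides (possible since $g$ is continuous), reducing the computation to $\tfrac12 g^{ij}(\partial_n g_{ij})$ evaluated on each side, which is exactly $-$(mean curvature) up to sign conventions; then the stated sign hypothesis on $H_1-H_2$ yields nonnegativity. The remaining steps are routine: the integration by parts in step (2)–(3) is justified because on each $M_i$ all integrands are products of $C^1$/$C^0$ functions up to the boundary, and the reduction in step (1) is immediate from the locality of the distributional pairing.
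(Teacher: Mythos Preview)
Your proposal is correct and follows essentially the same route as the paper: split the defining integral for $\llangle R_g,u\rrangle$ at $\Sigma$, integrate by parts on each side to recover $\int_{M_i}R_{g_i}u\,d\mu_{g_i}$ plus a boundary term, and identify that boundary term with the jump in mean curvature via Fermi-type coordinates adapted to $\Sigma$. The only cosmetic difference is that the paper first removes an $\epsilon$-neighborhood of $\Sigma$ and passes to the limit $\epsilon\to0$, whereas you split directly into $M_1$ and $M_2$; since each $g_i$ is $C^2$ up to the boundary this is equally valid, and the paper's explicit computation in Fermi coordinates (yielding a factor $2$ in front of the trace mean curvature) is exactly what your step~(4) calls for.
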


\begin{proof}
Assume the hypotheses of Proposition \ref{gluing}.  Clearly, the distributional scalar curvature of $g$ is greater or equal to $a$ away from $\Sigma$.  We need only consider what happens at $\Sigma$. We choose smooth Fermi coordinates  $(x_0,x_1,\ldots,x_{n-1})$ on a small open ball intersecting $\Sigma$ and, in these coordinates, we have
\[
g(x_0,x_1,\ldots,x_{n-1})=dx_0^2+g_{r}(x_1,\ldots,x_{n-1}),
\]
where $g_{r}$ is a metric on the hypersurface $\Sigma_r$ whose signed distance $\Sigma$ is $r=x_0$.  We also choose our coordinates so that the region $x_0<0$ corresponds to $M_1$ and the region $x_0>0$ corresponds to $M_2$.

Let $h$ be the background metric given by the expression $dx_0^2+dx_1^2+\cdots+dx_{n-1}^2$.  Let $u$ be a smooth non-negative function supported in the coordinate ball described above.  Let $M_\epsilon$ be $M$ with an $\epsilon$-neighborhood of $\Sigma$ removed from it.  Let $\nu$ be dual form to the outward pointing normal of $\partial M_\epsilon$.
Then, by our Definition~\ref{def:22}, we find 
\begin{align*}
\llangle R_g,  u \rrangle
&= \int_M \left( -V\cdot\nablab \left( u \frac{d\mu_g}{d\mu_h}\right)  + F \left( u \frac{d\mu_g}{d\mu_h}\right) \right)\,d\mu_h\\
&= \lim_{\epsilon\to0} 
\int_{M_\epsilon} \left( -V\cdot\nablab \left( u \frac{d\mu_g}{d\mu_h}\right)  + F \left( u \frac{d\mu_g}{d\mu_h}\right) \right)\,d\mu_h, 
\end{align*}
thus 
\begin{align*}
\llangle R_g,  u \rrangle
& = \lim_{\epsilon\to0} \int_{M_\epsilon} R_g u\,d\mu_g
-\int_{\partial M_\epsilon} (V\cdot\nu) \left( u \frac{d\mu_g}{d\mu_h}\right)  \,d\sigma_h\\
& \ge \lim_{\epsilon\to0} \int_{M_\epsilon} a u\,d\mu_g + \int_{\partial M_{\epsilon}}
\hskip-.2cm \left(-\Gamma^k_{ij} g^{ij}\nu_k+\Gamma^k_{ki}g^{ij}\nu_j\right)\Big( u \frac{d\mu_g}{d\mu_h}\Big) dx_1\ldots dx_{n-1}.
\end{align*}
Now consider the integral over $\partial M_{\epsilon}$ broken down into two pieces over $\Sigma_\epsilon$ and $-\Sigma_\epsilon$.  Observe that
\begin{align*}
& \int_{\Sigma_{\epsilon}}\left(-\Gamma^k_{ij} g^{ij}\nu_k+\Gamma^k_{ki}g^{ij}\nu_j\right)\left( u \frac{d\mu_g}{d\mu_h}\right)\,dx_1\ldots dx_{n-1}
\\
&= \int_{\Sigma_{\epsilon}}\left(-\Gamma^0_{ij} g^{ij}+\Gamma^k_{ki}g^{i0}\right)\left( u \frac{d\mu_g}{d\mu_h}\right)\,dx_1\ldots dx_{n-1}
\\
&= \int_{\Sigma_{\epsilon}}\left( -\frac{1}{2}(\nablab_i g_{j0}+\nablab_j g_{i0} -\nablab_0 g_{ij})g^{ij} \right.
\\
& \qquad\quad \left.
\quad +\frac{1}{2}g^{kl}(\nablab_k g_{0l} + \nablab_0 g_{kl} -\nablab_l g_{k0})\right)\left( u \frac{d\mu_g}{d\mu_h}\right)\,dx_1\ldots dx_{n-1}
\\
&= \int_{\Sigma_{\epsilon}}g^{ij}\nablab_0 g^{ij}
\left( u \frac{d\mu_g}{d\mu_h}\right)\,dx_1\ldots dx_{n-1}
= \int_{\Sigma_{\epsilon}}-2\widetilde H_\epsilon
\left( u \frac{d\mu_g}{d\mu_h}\right)\,dx_1\ldots dx_{n-1}, 
\end{align*}
where $\widetilde H_\epsilon$ is the mean curvature (scalar) of $\Sigma_\epsilon$, taken with respect to the unit normal $\frac{\partial}{\partial x_0}$.  After performing a similar computation for $\Sigma_{-\epsilon}$ and then taking the limit as $\epsilon$ approaches zero, we conclude that
\begin{align*}
\llangle R_g,  u \rrangle&\ge \int_M au\,d\mu_g + \int_\Sigma 2 \, (\widetilde H_1-\widetilde H_2)\left( u \frac{d\mu_g}{d\mu_h}\right)  \,dx_1\ldots dx_{n-1}\\
&\geq \int_M au\,d\mu_g,
\end{align*}
since $(\widetilde H_1-\widetilde H_2) \geq 0$ by hypothesis.  This completes the proof of Proposition \ref{gluing}.
\end{proof}

Finally, in view of the following proposition, from our Theorem \ref{pmt} we also recover the main result in \cite{Lee} in the spin case. In fact, we verify a conjecture in that paper, as now stated.

\begin{proposition}[Metrics with ``small'' singular set]
Let $g$ be a Lipschitz continuous metric on a $n$-manifold $M$ such that $g$ is $C^2$ regular away from a closed set $S$ with vanishing $(n-1)$-dimensional Hausdorff measure. If $g$ has scalar curvature $R_g\ge a$ away from $S$, then $g$ has distributional scalar curvature $R_g\ge a$ on $M$  in the sense of Definition~\ref{def:22}. 
\end{proposition}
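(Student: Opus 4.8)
The plan is to reduce the statement to the previously proved hypersurface case (Proposition~\ref{gluing}) by exhausting the singular set $S$ with tubular neighborhoods whose boundaries are smooth hypersurfaces on which $g$ is $C^2$. Since $S$ is closed with $\mathcal{H}^{n-1}(S)=0$, for every $\delta>0$ one can cover $S$ by an open set $U_\delta\supset S$ whose boundary $\partial U_\delta$ is a smooth (or at least Lipschitz, $C^2$ for the relevant computations) hypersurface in $M\setminus S$ and whose ``size'' is controlled; the key quantitative input is that the $(n-1)$-dimensional measure of $\partial U_\delta$ can be made to tend to zero as $\delta\to 0$. This is exactly the sort of covering lemma used in Lee~\cite{Lee}: because $\mathcal{H}^{n-1}(S)=0$, one can find finite covers of $S$ by small balls $B(x_i,r_i)$ with $\sum_i r_i^{n-1}$ arbitrarily small, and then take $U_\delta=\bigcup_i B(x_i,r_i)$ (smoothed slightly), so that $\mathcal{H}^{n-1}(\partial U_\delta)\le C\sum_i r_i^{n-1}\to 0$.

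First I would fix a non-negative test function $u$ (smooth, compactly supported), a background metric $h$, and the fields $V,F$ from \eqref{defineV}, \eqref{eq:207}; since $g$ is Lipschitz and $C^2$ off $S$, we have $V\in L^\infty_{\mathrm{loc}}$, $\nablab V\in L^\infty_{\mathrm{loc}}(M\setminus S)$, and $F\in L^\infty_{\mathrm{loc}}$, with the classical identity $R_g=\nablab_k V^k+F$ holding pointwise on $M\setminus S$ and $R_g\ge a$ there. Then on $M\setminus U_\delta$ (where $g$ is $C^2$) I would integrate by parts in the definition \eqref{distscalar}, producing the bulk term $\int_{M\setminus U_\delta} R_g\, u\, d\mu_g \ge \int_{M\setminus U_\delta} a\,u\, d\mu_g$ plus a boundary integral over $\partial U_\delta$ of the schematic form $\int_{\partial U_\delta}(V\cdot\nu)\, u\, \frac{d\mu_g}{d\mu_h}\, d\sigma_h$. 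As $\delta\to 0$, the bulk term converges to $\int_M a\, u\, d\mu_g$ by dominated convergence (using $|S|=0$ so $\mu_g(S)=0$), and the boundary term is bounded by $\|V\|_{L^\infty}\,\|u\|_{L^\infty}\,C\cdot\mathcal{H}^{n-1}(\partial U_\delta)\to 0$. Combining, $\llangle R_g,u\rrangle \ge \int_M a\, u\, d\mu_g$ for every non-negative $u$, which is the assertion $R_g\ge a$.

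The one subtlety — and the step I expect to be the main obstacle — is justifying that the interior integral $\int_{M\setminus U_\delta}(-V\cdot\nablab(u\,\tfrac{d\mu_g}{d\mu_h})+F\,u\,\tfrac{d\mu_g}{d\mu_h})\,d\mu_h$ really does converge to the full integral $\int_M(\cdots)\,d\mu_h = \llangle R_g,u\rrangle$ as $\delta\to0$; this needs that $V$ and $F$ and $\nablab\tfrac{d\mu_g}{d\mu_h}$ lie in $L^1_{\mathrm{loc}}$ (true, since $g$ is Lipschitz) so that integrability over the shrinking region $U_\delta$ kills the difference, which is fine because $\mu_h(U_\delta)\to 0$. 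The other point requiring a little care is the choice of $U_\delta$: one wants $\partial U_\delta$ to be $C^2$ (or piecewise-$C^2$) and to avoid $S$, so that $g$ is genuinely $C^2$ in a neighborhood of $\partial U_\delta$ and the divergence theorem applies classically there; this can be arranged by taking $U_\delta$ to be a sublevel set of a smooth mollification of the distance-to-$S$ function, invoking Sard's theorem to pick a regular value, while the Hausdorff-measure bound on $\partial U_\delta$ follows from the Vitali/Besicovitch-type covering argument of~\cite{Lee}. With these covering and regularity facts in hand, everything else is a routine repetition of the boundary-term computation already carried out in the proof of Proposition~\ref{gluing}.
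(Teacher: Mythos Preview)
Your argument is correct and follows essentially the same route as the paper: cover $S$ by a union of small balls with total $(n-1)$-content tending to zero, split the defining integral \eqref{distscalar} into the piece over the cover and the piece over its complement, use the Lipschitz bound on $V$ (and $F$) to kill the interior piece, and on the complement apply the divergence theorem so that the boundary term is controlled by $\|V\|_{L^\infty}$ times the perimeter of the cover, which tends to zero. The paper does exactly this.

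Two minor remarks. First, your opening framing (``reduce to Proposition~\ref{gluing}'') is a red herring: you never actually invoke the hypersurface result or any mean-curvature computation, and you do not need to --- the boundary term here vanishes simply because the perimeter goes to zero, not because of any sign condition on a jump. Second, the paper is slightly more economical on the regularity of $\partial U_\delta$: it takes the raw union of balls and treats it as a set of finite perimeter, so the divergence theorem for BV functions applies directly and there is no need for Sard's theorem or a mollified distance function. Your smoothing step is harmless but unnecessary.
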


\begin{proof} Assume the hypotheses of the proposition. Observe that the $C^2$ regularity is defined on the {\sl open} set $M \setminus S$, and the scalar curvature is then well-defined in a classical sense on $M \setminus S$. 
Fix some $\epsilon>0$.
By the assumption on Hausdorff measure, the set $S$ can be covered by countably many balls $B_{\delta_i}(x_i)$ such that $\sum \delta_i^{n-1}<\epsilon$. Let $E_\epsilon = \bigcup_{1=1}^\infty B_{\delta_i}(x_i)$. The set $E_\epsilon$ is then a set with finite perimeter less than $\sum_{i=1}^\infty |\partial B_{\delta_i}(x_i)| = \sum_{i=1}^\infty \omega_{n-1}\delta_i^{n-1} < \omega_{n-1}\epsilon$, where $\omega_{n-1}$ is the volume of the standard unit $(n-1)$-sphere. In particular, $|\partial E_\epsilon|\to 0$ as $\epsilon\to0$, and clearly, $|E_\epsilon|\to0$.

For every smooth non-negative function $u$ with compact support, we have 
\begin{align*}
\llangle R_g,  u \rrangle
&= \int_M \left( -V\cdot\nablab \left( u \frac{d\mu_g}{d\mu_h}\right)  + F \left( u \frac{d\mu_g}{d\mu_h}\right) \right)\,d\mu_h\\
&= \int_{E_\epsilon}  \left( -V\cdot\nablab \left( u \frac{d\mu_g}{d\mu_h}\right)  + F \left( u \frac{d\mu_g}{d\mu_h}\right) \right)\,d\mu_h
\\
& \qquad 
+ \int_{M\smallsetminus E_\epsilon}  \left( -V\cdot\nablab \left( u \frac{d\mu_g}{d\mu_h}\right)  + F \left( u \frac{d\mu_g}{d\mu_h}\right) \right)\,d\mu_h. 
\end{align*}
Since $|E_\epsilon|\to0$ and the field $V$ (defined in \eqref{defineV}) is bounded for Lipschitz continuous metrics, the first integral converges to zero as $\epsilon\to0$. By the divergence theorem, the second integral term becomes ($d\sigma_h$ denoting the $(n-1)$-volume form induced by the background metric $h$) 
$$
\aligned
& \int_{M\smallsetminus E_\epsilon} R_g u\,d\mu_g + \int_{\partial E_\epsilon} (V\cdot\nu) \left( u \frac{d\mu_g}{d\mu_h}\right)  \,d\sigma_h
\\
& \ge \int_{M\smallsetminus E_\epsilon} a u\,d\mu_g + \int_{\partial E_\epsilon} (V\cdot\nu) \left( u \frac{d\mu_g}{d\mu_h}\right)  \,d\sigma_h,
\endaligned
$$
which approaches $\int_M a u\,d\mu_g$ as $\epsilon\to0$ since $|\partial E_\epsilon|\to 0$ and the field $V$ is bounded. 
\end{proof}


\section{Application to the universal positive mass theorem}

The recent observations by Herzlich \cite{Herzlich} are now generalized, so that the non-negative curvature condition can be replaced by our non-negative distributional curvature condition. The modifications are essentially straightforward, so we will try to be brief. First of all, we must summarize the relevant parts of \cite{Herzlich}.

Choose any representation $V$ of $\spin(n)$ and then decompose $\rr^n\otimes V = \bigoplus_{j=1}^N W_j$ into its irreducible components. Given a spin structure on a manifold $M=M^n$ we may define the spinor bundles $E$ and $F_j$ associated with the representations $V$ and $W_j$, so that  
\[ 
T^*M \otimes E = \bigoplus_{j=1}^N F_j. 
\]
Given a metric on $M$, we can define a spin connection $\nabla$ on $E$. Let $\Pi_j$ be the projection onto the $F_j$ component, and define $P_j = \Pi_j \circ \nabla$. A {\bf generalized Bochner--Weitzenb\"{o}ck formula} is then defined 
as a choice of constants $a_j$ such that
\begin{equation} \label{BW}
\mathcal{R}:= \sum_{j=1^n} a_j P_j^*P_j 
\end{equation}
happens to be a curvature operator in $\End(E)$. This corresponds to choices of $a_j$  that make this expression a zero-th order operator. Given $V$, these solutions are classified in the literature \cite{Homma,SW}. 
Next, we introduce 
$$
P_+ = \sum_{a_j>0} \sqrt{a_j}P_j, \qquad P_+ = \sum_{a_j>0} \sqrt{a_j}P_j.
$$ 

The main result of \cite{Herzlich} can be stated as follows. 

\begin{theorem}[Herzlich's universal positive mass theorem]
\label{thm:Herzlich}
Let $(M,g)$ be a smooth, complete, asymptotically flat spin manifold. Then, with the construction above, there exists a constant $\mu$ depending only on $V$ and $a_j$ such that if $\psi_k$ are sections of $E$ asymptotic to an orthonormal basis of $V$ at infinity (with appropriate decay), then one has the following expression of the ADM mass of $M$:  
\begin{equation}\label{massformula}
 \mu(V,a_j) \, \mADM = \sum_{k=1}^{\dim V}  -\| P_+\psi_k\|_{L^2}  + \|P_- \psi_k\|^{L^2} + \langle \mathcal{R}\psi_k ,\psi_k \rangle_{L^2}.
 \end{equation}
\end{theorem}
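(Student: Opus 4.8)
The plan is to transcribe the proof of Theorem~\ref{pmt} from Sections~\ref{sec:3}--\ref{sec:4}, replacing the Dirac operator $\D$ and the Lichnerowicz--Weitzenb\"{o}ck identity \eqref{eq:Lichne} by the operator $P_+$ and the generalized Bochner--Weitzenb\"{o}ck identity \eqref{BW}, respectively. The one genuinely new ingredient, parallel to Section~\ref{sec:21}, is a \emph{distributional curvature operator}: because $\mathcal{R}=\sum_j a_j P_j^*P_j$ is a zeroth-order operator, its second-order part cancels, so exactly as $R_g=\nablab_k V^k+F$ in \eqref{defineV} one can rewrite the bilinear pairing $(\psi,\phi)\mapsto\langle\mathcal{R}\psi,\phi\rangle$ in a ``divergence-plus-lower-order'' form built only from $g^{-1}$, $\nablab g$, $\psi$, $\phi$, $\nabla\psi$, $\nabla\phi$. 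Such a pairing makes sense against sections whose components lie in $L^{n/(n-2)}$ with derivatives in $L^{n/(n-1)}$, precisely as in Proposition~\ref{prop:regul1}. With this in hand I would prove the distributional Bochner--Weitzenb\"{o}ck identity
\[
\langle P_+\psi,P_+\phi\rangle_{L^2}-\langle P_-\psi,P_-\phi\rangle_{L^2}=\llangle\mathcal{R},\langle\psi,\phi\rangle\rrangle,
\]
valid for $\Wloc^{1,2}$ sections $\psi,\phi$ of $E$ with $\phi$ compactly supported and $g$ only $C^0\cap\Wloc^{1,n}$, by the density argument of Proposition~\ref{distLW}: approximate $g$ by $C^2$ metrics in $C^0\cap W^{1,n}_\loc$ and the sections in $W^{1,2}$, using Lemma~\ref{equivalent} to pass between the $g$- and $h$-norms.

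Second, as in Section~\ref{sec:32}, I would insert $\phi=\chi_\rho\psi$ with the cut-off \eqref{cut-off} to manufacture an annular boundary term; the analog of Lemma~\ref{Lem:BoundaryTerm} then equates $\langle P_+\psi,\chi_\rho P_+\psi\rangle_{L^2}-\langle P_-\psi,\chi_\rho P_-\psi\rangle_{L^2}-\llangle\mathcal{R},\langle\psi,\chi_\rho\psi\rangle\rrangle$ with an integral $\tfrac{1}{\eps}\int_{\rho<r<\rho+\eps}(\cdots)\nabla_i r\,d\mu_g$. Taking $\psi$ with $\psi-\psi_0\in W^{1,2}_{-q}(E)$ and $\psi_0$ asymptotic to a fixed element of $V$ (with $q=(n-2)/2$), I would redo Herzlich's computation of this boundary term while tracking orders: expand the connection $1$-forms as in \eqref{ConnectionForm}, integrate the cross terms $\langle P_j\xi,\psi_0\rangle$ by parts, discard the resulting $\partial\Omega$-contributions (by (anti)symmetry) together with all genuinely quadratic-in-$\xi$ pieces, which are $O(L^1_{-2q-1})$, and collect the surviving leading term. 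By Herzlich's algebra this equals $\mu(V,a_j)\cdot\frac{1}{2(n-1)\omega_{n-1}}\,\tfrac{1}{\eps}\int_{\rho<r<\rho+\eps}V\cdot\nablab r\,d\mu_h$ modulo an $L^1_{-2q-1}$ error, i.e.\ (a fixed multiple of) the generalized ADM mass --- this is the analog of Lemma~\ref{BoundaryLimit}, the only point beyond \cite{Herzlich} being the bookkeeping of which remainders are negligible under the borderline decay $g-h\in W^{1,n}_{-q}$.

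Third, one needs sections $\psi_k$, as in Corollary~\ref{Dirac-solve}: they should solve the first-order equation attached to $P_+$ (the analog of $\D\psi=0$) while being asymptotic to an orthonormal basis of $V$ with the decay above. Existence should follow, as in Proposition~\ref{DiracIsomorphism}, from a weighted isomorphism statement whose coercivity estimate on $W^{1,2}_{-q}$ is furnished by the distributional Bochner--Weitzenb\"{o}ck identity together with $\mathcal{R}\ge0$ and the sub-convergence to zero of the annular term (Lemma~\ref{subconverge}), plus a weighted Poincar\'{e} inequality, with surjectivity coming from Riesz representation for the Hilbert product built from $P_+$ and a weak-limit argument, verbatim as before. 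Assembling the distributional Bochner--Weitzenb\"{o}ck identity, the boundary identification, and the existence of the $\psi_k$, and letting $\rho\to+\infty$ along a good sequence of radii, yields \eqref{massformula}.

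I expect the main obstacle to be this third step. Unlike $\D$, the operator $P_+$ is only \emph{overdetermined} elliptic, so the weighted Fredholm and regularity theory at the regularity $C^0\cap W^{1,n}$ cannot be quoted off the shelf: the coercivity of $P_+$ on $W^{1,2}_{-q}$ must be extracted from the generalized Weitzenb\"{o}ck identity and the non-negativity of $\mathcal{R}$, and one has to check that solvability with prescribed constant asymptotics persists at the critical weight $q=(n-2)/2$. A secondary, purely computational difficulty is matching the boundary term to Herzlich's normalization constant $\mu(V,a_j)$ while assuming only $g-h\in W^{1,n}_{-q}$; but, as the authors note, this is a routine transcription of the arguments already carried out in Sections~\ref{sec:3}--\ref{sec:4}.
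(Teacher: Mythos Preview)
Your proposal targets a different statement from the one quoted. Theorem~\ref{thm:Herzlich} is stated for a \emph{smooth} metric and is not proved in the paper at all: it is quoted from \cite{Herzlich}, and the paper's only commentary is the short paragraph following the statement, namely that integrating \eqref{BW} by parts against $\psi_k$ produces a boundary term at infinity, and that Herzlich's algebraic computation shows this boundary term, after summing over an orthonormal basis, is a universal multiple $\mu(V,a_j)$ of the mass. No distributional apparatus, cut-off functions, or low-regularity density arguments are needed or used. What you have written is a (reasonable) sketch of the material \emph{after} Theorem~\ref{thm:Herzlich}, i.e.\ the distributional generalization culminating in equation~\eqref{GeneralBoundary}, not a proof of Theorem~\ref{thm:Herzlich} itself.

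There is also a substantive error in your plan, even viewed as a proof of the distributional version. Your ``third step'' --- constructing sections $\psi_k$ solving $P_+\psi_k=0$ with prescribed asymptotics, via a weighted isomorphism for $P_+$ --- is \emph{not} part of establishing the mass formula \eqref{massformula}. The formula holds for \emph{arbitrary} sections $\psi_k$ asymptotic to an orthonormal basis of $V$ with the stated decay; there is no PDE to solve. Solving $P_+\psi_k=0$ would only be relevant if one wanted to deduce a \emph{sign} for $\mADM$ under a hypothesis like $\mathcal{R}\ge 0$, which is a separate application mentioned by the paper but not the content of Theorem~\ref{thm:Herzlich}. Relatedly, your worry about overdetermined ellipticity of $P_+$ is a genuine obstacle for that further step, but it simply does not arise in proving the identity \eqref{massformula}.
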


Clearly, in view of the formula \eqref{BW} on $\langle \mathcal{R}\psi_k ,\psi_k \rangle_{L^2}$ and by suitably integrating by parts, one must arrive at a formula like \eqref{massformula}. The important observation is that the boundary term is precisely a \emph{multiple $\mu$ of the mass,} at least after summing over an orthonormal basis and  $\mu$ is a {\sl universal constant} independent of $(M,g)$ (and computed explicitly in \cite{Herzlich}).
 From Theorem \ref{thm:Herzlich} it is clear that if $\mathcal{R}$ is a non-negative operator and one can find appropriate solutions of $P_+ \psi_k=0$, then one obtains either $\mADM\ge0$ or $\mADM\le0$ (depending on the sign of $\mu$). 

We now claim that there is a distributional version of Theorem \ref{thm:Herzlich}. Consider the basic setup that was used in Sections 2 through \ref{sec:6}. 
Given a curvature operator $\mathcal{R}$ as in \eqref{BW}, it must be of the form 
\be
\mathcal{R} = \overline{\Div} V + F,
\ee
where $V$ is a section of $TM\otimes \End(E)$ and $F$ is a section of $\End{E}$, and these sections must have the same respective regularity as the terms $V$ and $F$ that appear in the scalar curvature decomposition formula \eqref{scalar}. This allows us to define $\mathcal{R}$ as a distribution as in Definition \ref{def:22}, with obvious modification due to the fact that $\mathcal{R}$ is no longer a scalar. Just as in Proposition \ref{distLW}, we easily obtain a distributional version of the generalized Bochner--Weitzenb\"{o}ck formula \eqref{BW}.

\begin{lemma}[Universal Lichnerowicz--Weitzenb\"{o}ck formula for distributional curvature] 
Assume that $g$ is a $C^0\cap \Wloc^{1,n}$ metric on a smooth spin manifold $M^n$. If $\psi$ and $\phi$ are $\Wloc^{1,2}$ spinors and $\phi$ has compact support, then one has 
\bel{LW}
0= - \langle P_+\psi , P_+ \phi\rangle_{L^2} +  \langle P_- \psi, P_-\phi\rangle_{L^2}
+\left\llangle \mathcal{R}_g \psi,\phi\right\rrangle,
 \ee
 where the last term is defined in the sense of distributions.
\end{lemma}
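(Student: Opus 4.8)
The plan is to follow the proof of Proposition~\ref{distLW} essentially line by line, with the pair $(\D,\nabla)$ replaced by the pair $(P_+,P_-)$ and with the scalar curvature decomposition \eqref{scalar} replaced by the matrix-valued decomposition $\mathcal{R}=\overline{\Div}\,V+F$ introduced just above.

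First I would verify that all three terms in the asserted identity are well-defined under the stated hypotheses. Since $g\in C^0\cap\Wloc^{1,n}$, the connection $1$-forms of $g$ are $\Lloc^n$, so the spin connection $\nabla$ on $E$, and hence each of $P_+\psi$, $P_-\psi$, $P_+\phi$, $P_-\phi$, lies in $\Lloc^2$ whenever $\psi,\phi\in\Wloc^{1,2}$; as the relevant integrations take place over the compact support of $\phi$, the first two $L^2$ pairings in the lemma are finite. For the last term, one uses $\psi,\phi\in\Wloc^{1,2}\subset\Lloc^{2n/(n-2)}$, so that the compactly supported $\End(E)$-valued combination $\langle\psi,\phi\rangle$ built from $\psi$ and $\phi$ lies in $\Lloc^{n/(n-2)}$ with derivative in $\Lloc^{n/(n-1)}$; the $\End(E)$-valued analog of Proposition~\ref{prop:regul1} — valid because $V$ and $F$ have exactly the same regularity as their scalar counterparts in \eqref{scalar}, namely $V\in\Lloc^n$, $F\in\Lloc^{n/2}$, and $d\mu_g/d\mu_h\in C^0$ with $\nablab(d\mu_g/d\mu_h)\in\Lloc^n$ — then shows that $\llangle\mathcal{R}_g\psi,\phi\rrangle$ makes sense.

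Next, for \emph{smooth} $g$, $\psi$, $\phi$ with $\phi$ compactly supported, I would integrate the pointwise identity $\mathcal{R}=\sum_j a_j P_j^*P_j$ against $\phi$ and integrate by parts, using that the images of the projections $\Pi_j$ are mutually orthogonal to kill the cross terms, obtaining $\langle\mathcal{R}\psi,\phi\rangle_{L^2}=\sum_j a_j\langle P_j\psi,P_j\phi\rangle_{L^2}=\langle P_+\psi,P_+\phi\rangle_{L^2}-\langle P_-\psi,P_-\phi\rangle_{L^2}$. This is precisely the desired identity with the distributional bracket replaced by the honest $L^2$ pairing $\langle\mathcal{R}\psi,\phi\rangle_{L^2}$. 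Rewriting this zeroth-order term via $\mathcal{R}=\overline{\Div}\,V+F$ and one further integration by parts yields the matrix-valued analog of \eqref{preLW}, an identity between integrals of $\Lloc^1$ quantities of the form $\frac{d\mu_h}{d\mu_g}\,V\cdot\nablab(\cdots)$ and $F(\cdots)$.

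Finally I would run the two-step density argument of Proposition~\ref{distLW}: with $\psi,\phi$ smooth, approximate $g$ by $C^2$ metrics $g_i\to g$ in $C^0\cap W^{1,n}(K')$ on a fixed compact neighborhood $K'$ of $\spt\phi$, noting that the spin connections, and hence $P_+$ and $P_-$, converge in $L^n(K')$ when applied to the fixed smooth spinors, while $V_i\to V$ in $L^n(K')$ and $F_i\to F$ in $L^{n/2}(K')$; then, with $g$ fixed, approximate $\psi$ and $\phi$ in $W^{1,2}$ by $C^2$ spinors (using the $h$-norm together with Lemma~\ref{equivalent}) and pass to the limit in the $\Lloc^1$ integrand. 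The step I expect to be the main obstacle is making the ``obvious modification'' to the $\End(E)$-valued setting genuinely rigorous: one must check that \emph{every} curvature operator arising as in \eqref{BW} does admit a decomposition $\overline{\Div}\,V+F$ whose components inherit the regularity of the scalar fields $V,F$ of \eqref{scalar}, and set up the $\End(E)$-valued version of Definition~\ref{def:22} and Proposition~\ref{prop:regul1} carefully enough that each pairing and integration by parts above is justified; once this bookkeeping is in place, the argument is formally identical to the scalar case.
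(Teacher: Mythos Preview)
Your proposal is correct and follows exactly the approach the paper takes: the paper does not give a separate proof of this lemma but simply says that ``just as in Proposition~\ref{distLW}, we easily obtain a distributional version of the generalized Bochner--Weitzenb\"ock formula,'' and you have faithfully spelled out that analogy, including the well-definedness check, the smooth-case integration by parts, and the two-step density argument. Your closing remark about the bookkeeping needed for the $\End(E)$-valued version of Definition~\ref{def:22} and Proposition~\ref{prop:regul1} is precisely the ``obvious modification due to the fact that $\mathcal{R}$ is no longer a scalar'' to which the paper alludes.
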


Our next step is to compute the boundary term as in Lemma 2.2 of \cite{Herzlich}, except that we must modify the computation along the lines of Section \ref{sec:32}.

\begin{lemma}[A universal identity for distributional curvature] 
Let $(M,\Phi)$ be a background $n$-manifold endowed with background metric data $(h,r)$, and assume that $g$ is a $C^0\cap W^{1,n}_{-q}$ regular, asymptotically flat metric with $q=(n-2)/2$. Let $\psi_0$ be a constant\footnote{with respect to the flat connection, as explained in \cite{Herzlich}} spinor in $E$ defined over the asymptotically flat coordinate chart. Then for  any $\rho>2$, $\eps>0$, and any spinor $\psi$ with $\psi-\psi_0 \in W^{1,2}_{-q}(E)$, one has 
\be\label{GeneralBoundary}
\aligned
& - \langle P_+\psi ,\chi_\rho P_+\psi \rangle_{L^2} +  \langle P_- \psi, \chi_\rho P_-\psi\rangle_{L^2}
+\left\llangle \mathcal{R}_g \psi,\chi_\rho\psi\right\rrangle 
\\
&= -\frac{1}{\eps}\int_{\rho<r<\rho+\eps} \left\langle (\overline{\nabla} r)\otimes \psi_0, \sum_{j=1}^N a_j\Pi_j(\nabla \psi_0)\right\rangle\,d\mu_g + \int_{\rho<r<\rho+\eps} u\,d\mu_h 
\endaligned
\ee
for some $u\in L^1_{-2q-1}$, where $\chi_\rho$ is the cut-off function introduced in Section \ref{sec:32}.
\end{lemma}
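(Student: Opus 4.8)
The plan is to mimic the proof of Lemma~\ref{BoundaryLimit} almost verbatim, replacing the scalar Lichnerowicz--Weitzenb\"ock setup by the universal one. First I would apply the distributional universal Lichnerowicz--Weitzenb\"ock formula \eqref{LW} with test spinor $\phi = \chi_\rho\psi$; since $\chi_\rho$ is a compactly supported Lipschitz function and $\psi \in \Wloc^{1,2}(E)$, this is legitimate (exactly as in Lemma~\ref{Lem:BoundaryTerm}). Expanding $P_\pm(\chi_\rho\psi)$ via the Leibniz rule, $P_j(\chi_\rho\psi) = \chi_\rho P_j\psi + \Pi_j((\overline\nabla\chi_\rho)\otimes\psi)$, the $\chi_\rho$-weighted bulk terms assemble into the left-hand side of \eqref{GeneralBoundary}, and the cross terms localize on the annulus $\Omega = \{\rho < r < \rho+\eps\}$ since $\overline\nabla\chi_\rho = -\tfrac1\eps\overline\nabla r$ there. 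This produces an identity of the form (left side of \eqref{GeneralBoundary}) $= \tfrac1\eps\int_\Omega \langle \text{(quadratic-in-}\nabla\psi)\rangle\,d\mu_g$, the universal analogue of equation \eqref{BoundaryTerm}.

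Next I would carry out the asymptotic simplification of that annular integral, following the computation in Lemma~\ref{BoundaryLimit} step for step. Writing $\xi := \psi - \psi_0 \in W^{1,2}_{-q}(E)$ and substituting $\psi = \psi_0 + \xi$, the purely-$\psi_0$ piece of the integrand is what survives; every term containing $\xi$ either is already $O(L^1_{-2q-1})$ (products of $\xi$ with a first derivative of the metric, absorbed into $u$), or is $\nabla_j\xi$ paired against a zeroth-order spinor, which one integrates by parts over $\Omega$. The boundary contributions of that integration by parts vanish either by the anti-symmetry already exploited in Lemma~\ref{BoundaryLimit} or by the structural anti-symmetry of the operators $P_j$ relative to the normal direction $\nu = \overline\nabla r$ (this is the place where one invokes the representation-theoretic fact from \cite{Herzlich} that makes $\sum a_j P_j^*P_j$ zeroth order), and the remaining volume terms are $O(L^1_{-2q-1})$ by the decay $e_i - \partial_i \in W^{1,n}_{-q}$ together with \eqref{ConnectionForm}. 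After discarding all such error terms, the surviving $\psi_0$-only expression is exactly $-\tfrac1\eps\int_\Omega \langle(\overline\nabla r)\otimes\psi_0,\sum_j a_j\Pi_j(\nabla\psi_0)\rangle\,d\mu_g$, because on $\Omega$ the connection $1$-forms of $g$ differ from the flat ones by $W^{1,n}_{-q}$ quantities and $\Pi_j(\nabla\psi_0)$ picks out precisely the first-derivative-of-metric part, with the $a_j$-weighting coming from how $P_\pm$ recombines the $P_j$'s. Collecting the error terms into a single $u \in L^1_{-2q-1}$ (using, as in the source lemma, the notation $O(L^1_{-2q-1})$ and the fact that $d\mu_g$ and $d\mu_h$ are mutually bounded) finishes the identity.

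The main obstacle I anticipate is not analytic but bookkeeping: tracking the constants $a_j$ and the projections $\Pi_j$ correctly through the Leibniz expansion of $P_\pm(\chi_\rho\psi)$ and verifying that the cross terms really do reassemble into the single contracted expression $\langle(\overline\nabla r)\otimes\psi_0,\sum_j a_j\Pi_j(\nabla\psi_0)\rangle$ rather than some less symmetric combination. Concretely, one must check that $\sum_j a_j\langle\Pi_j((\overline\nabla r)\otimes\psi), P_j\psi\rangle$ — which is what naturally appears — equals $\langle(\overline\nabla r)\otimes\psi, \sum_j a_j\Pi_j\nabla\psi\rangle$ up to terms absorbed into $u$; this is immediate since $\Pi_j$ is an orthogonal projection and $P_j\psi = \Pi_j\nabla\psi$, but one has to be careful that the relevant Clifford/projection operators are (skew-)Hermitian in the same way the operators $e_ie_j$ with $i\neq j$ were in Lemma~\ref{BoundaryLimit}. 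All of this is ``essentially straightforward'' in the sense flagged at the start of this section, and the only genuinely new input beyond Section~\ref{sec:32} is the classification result from \cite{Herzlich,Homma,SW} guaranteeing that $\mathcal{R}$ is zeroth order, which is exactly what forces the would-be second-derivative boundary terms to cancel.
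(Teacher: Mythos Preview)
Your proposal is correct and matches the paper's approach exactly: the paper does not give a detailed proof of this lemma but simply says to ``compute the boundary term as in Lemma 2.2 of \cite{Herzlich}, except that we must modify the computation along the lines of Section~\ref{sec:32},'' which is precisely the strategy you outline (first the analogue of Lemma~\ref{Lem:BoundaryTerm}, then the analogue of Lemma~\ref{BoundaryLimit}). Your identification of the bookkeeping with the projections $\Pi_j$ and the $a_j$ as the only nontrivial point is also consistent with the paper's remark that the modifications are ``essentially straightforward.''
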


Finally, the same argument as in \cite[Theorem 3.1]{Herzlich} shows that if we sum the above formula over a basis, then the annular boundary term in \eqref{GeneralBoundary} must be a multiple of the ``classical'' case in which we have already shown that the limit of the annular term corresponds to the mass, and this multiple must depend only on $V$ and $a_j$. 

Finally, if we assume that $R_g$ is a finite, signed measure outside a compact set (so that we can invoke Part 1 of Proposition \ref{mass}) and make a similar assumption about $\mathcal{R}$, we obtain a distributional version of equation \eqref{massformula} with our weaker regularity assumptions.


\section*{Acknowledgments}

This material is based upon work supported by the National Science Foundation under Grant No.\ 0932078 000, while the authors were in residence at the Mathematical Sciences Research Institute in Berkeley, California, during the Fall of 2013. The second author (PLF) was partially supported by the Agence Nationale de la Recherche through the grant ANR SIMI-1-003-01. 


\end{document}